                   \def\href#1{\relax}\let\foo\caption
\let\caption\foo
 \newtheorem{thm}{Theorem}[section]
 \newtheorem{cor}[thm]{Corollary}
 \newtheorem{lem}[thm]{Lemma}
 \newtheorem{prop}[thm]{Proposition}
 \theoremstyle{definition}
 \newtheorem{defn}[thm]{Definition}
 \theoremstyle{remark}
 \numberwithin{equation}{section}
\newcommand{\Spin}{\mathop{\mathrm{Spin}}}
\newcommand{\Cl}{\mathop{\mathrm{Cl}}}
\newcommand{\Aut}{\mathop{\mathrm{Aut}}}
\begin{document}                  



\title{Platonic solids generate their four-dimensional analogues}
\shorttitle{4D polytopes from 3D spinors}


     \cauthor[a,b,c]{Pierre-Philippe}{Dechant}
{ppd22@cantab.net}{}

\aff[a]{Institute for Particle Physics Phenomenology, Ogden Centre for Fundamental Physics, Department of Physics,  University of Durham, South Road, \city{Durham, DH1 3LE}, \country{United Kingdom}}
\aff[b]{Physics Department, Arizona State University, \city{Tempe, AZ 85287-1604}, \country{United States}}
   \aff[c]{Mathematics Department, University of York, \city{Heslington, York, YO10 5GG}, \country{United Kingdom}}


\shortauthor{P-P Dechant}




\keyword{Polytopes}
\keyword{Platonic Solids}
\keyword{4-dimensional geometry}
\keyword{Clifford algebras}
\keyword{Spinors}
\keyword{Coxeter groups}
\keyword{Root systems}
\keyword{Quaternions}
\keyword{Representations}
\keyword{Symmetries}
\keyword{Trinities}
\keyword{McKay correspondence}



\maketitle                        

\begin{synopsis}
A Clifford spinor construction shows how the Platonic solids induce their four-dimensional counterparts and determine their symmetries. 
\end{synopsis}

\begin{abstract}
	In this paper, we show how   regular convex 4-polytopes -- the analogues of the Platonic solids in four dimensions --  can be constructed from three-dimensional considerations concerning the Platonic solids alone. 
		Via the Cartan-Dieudonn\'e theorem, the reflective symmetries of the Platonic solids  generate rotations. 
	In a Clifford algebra framework, the space of spinors generating such three-dimensional rotations has a natural four-dimensional Euclidean structure.
The  spinors arising from the Platonic Solids can thus in turn be interpreted as vertices in four-dimensional space, giving a simple construction of the 4D polytopes  16-cell, 24-cell, the $F_4$ root system and the 600-cell. 
	In particular, these polytopes have `mysterious' symmetries, that are almost trivial when seen from the three-dimensional spinorial point of view.
	In fact, all these induced polytopes are also known to be root systems and thus generate rank-4 Coxeter groups, which can be shown to be a general property of the  spinor construction.
	These considerations thus also apply to other root systems such as $A_1 \oplus I_2(n)$ which induces $I_2(n)\oplus I_2(n)$, explaining the existence of the grand antiprism and the snub 24-cell, as well as their symmetries. 
	We discuss these results in the wider mathematical context of Arnol'd's trinities and the McKay correspondence. 
	These results are thus a novel link between the geometries of three and four dimensions, with interesting potential applications on both sides of the correspondence, to real 3D systems with polyhedral symmetries such as (quasi)crystals and viruses, as well as 4D geometries arising  for instance in  Grand Unified Theories and String and M-Theory.
	\\IPPP/13/28, DCPT/13/56\\
\end{abstract}


%
	
\section{Introduction} \label{ACA_Intro}

%

The Platonic solids are the regular convex polytopes in three dimensions; that is they consist of identical vertices and faces that are themselves regular polygons. There are five such solids, namely the cube (8 vertices, 6 faces) and the octahedron (6 vertices, 8 faces),  which are dual under the exchange of face midpoints and vertices, the dual pair dodecahedron  (20 vertices, 12 faces) and icosahedron  (12 vertices, 20 faces), and the self-dual tetrahedron  (4 vertices, 4 faces). These objects are familiar from everyday life, and have in fact been known to humankind for millennia, in particular  at least a thousand years prior to Plato to the neolithic people in Scotland. However, the solids have also always inspired `cosmology', and are named after Plato for their use in his philosophy, in which four of the solids explain the elements (icosahedron as water, cube as earth, octahedron as air and the tetrahedron as fire) and the dodecahedron is the ordering principle of the universe.  Johannes Kepler also attempted to explain the planetary orbits in terms of the Platonic solids, and more recent attempts include the Moon model of the nucleus \cite{Hecht2004Moon} and the Poincar\'e dodecahedral space model of the universe \cite{Luminet2003Dodecahedral}. These more recent fundamental attempts aside, the Platonic solids feature prominently in the natural world wherever geometry and symmetry are important, for instance, in lattices and quasicrystals, molecules such as fullerenes and viruses.  The symmetries of the Platonic solids -- the Coxeter (reflection) groups $A_3$, $B_3$ and $H_3$ for the tetrahedron, cube/octahedron and icosahedron/dodecahedron respectively --  and related Coxeter group symmetries also arise in theoretical physics, for instance in the context of gravitational singularities \cite{HennauxPersson2008SpacelikeSingularitiesAndHiddenSymmetriesofGravity} or the study of topological defects such as the Skyrme model of the nucleus \cite{Manton2004topological}.

The Platonic solids have counterparts in four dimensions. Generalisations of the tetrahedron, cube and octahedron exist in any dimension (the hypersimplex, hypercube and hyperoctahedron), but dimension four is special in that it has three exceptional cases of regular convex polytopes much like the Platonic solids in three dimensions (dodecahedron and icosahedron). These are the hypericosahedron or 600-cell and its dual the 120-cell with symmetries given by the exceptional Coxeter group $H_4$ (which is the largest non-crystallographic Coxeter group and therefore has no higher-dimensional counterpart), and the self-dual 24-cell related to the exceptional phenomena of triality of $D_4$ and the Coxeter group $F_4$. The peculiarities also include mysterious symmetries of these `4D Platonic solids' and the property that several  are root systems (only the octahedron is a root system in 3D), including the hyperoctahedron (or 16-cell) with its dual hypercube, the 8-cell. The 4-simplex is also called the 5-cell, and is self-dual. A summary of regular convex polytopes is displayed in Table \ref{tab_Solids}. 

We therefore adopt the language of Coxeter groups and root systems as appropriate for the description of the reflection symmetry groups of the Platonic solids and their generalisations. Clifford's geometric algebra has an elegant way of handling orthogonal transformations, in particular a very simple description of reflections and rotations. However, an application to the 
root system framework appears only to have been performed in \cite{Dechant2012CoxGA}\cite{Dechant2012AGACSE}. Polytopes in different dimensions are not commonly thought to be related. However, our Clifford/Coxeter approach makes a novel link by showing that the Platonic solids in fact induce their four-dimensional counterparts and their symmetries via a Clifford spinor construction, which explains all the above exceptional, accidental peculiarities of four dimensions.  

 Coxeter groups in dimension four actually feature prominently in high energy physics, and the spinorial nature of our construction could thus have interesting consequences. For instance, $D_4$ is related to the $SO(8)$ symmetry of the transverse dimensions in string theory, and the accidental triality property is crucial for showing the equivalence of the Ramond-Neveu-Schwarz and the Green-Schwarz strings. Similarly $B_4$ corresponds to $SO(9)$ as the little group in M-theory, and $A_4$ is related to $SU(5)$ Grand Unified Theories. All three groups are in turn contained in  the larger exceptional groups $F_4$ and $H_4$, which could themselves become phenomenologically important and their spinorial nature could have interesting consequences. 

Whilst the literature contains partial, loosely connected results on the existence of quaternionic descriptions of these root systems and their automorphism groups (see, e.g. \cite{Humphreys1990Coxeter} and a series of papers by Koca  \cite{Koca2006F4}), we do not think it is a very useful approach and giving a summary would necessarily be very long and fragmented (some more details are contained in \cite{Dechant2012CoxGA}). We believe that we are the first to give a straightforward and uniform proof of their existence and structure. Furthermore, our Clifford spinor approach has the additional benefit of a geometric understanding over a purely algebraic approach, and it is clear what results mean geometrically at any  conceptual stage. This approach thus reveals novel links between the Platonic solids and their four-dimensional counterparts. 

Our link between the Platonic solids, and more generally the spinorial nature of various 4D phenomena could therefore result in a plethora of unknown connections due to  a novel spinorial view of symmetries, for instance in the context of Arnol'd's trinities \cite{Arnold2000AMS} and the McKay correspondence \cite{Mckay1980graphs}. 

The article begins with a review of some necessary background in the Coxeter group and root system framework and in Clifford algebra in Section \ref{ACA_MathBack}.
Section \ref{ACA_Plato} shows how the 3D Platonic solids induce their 4D analogues, and discusses the encountered structures in the context of trinities. 
Section \ref{ACA_general} explains the general nature of the Clifford spinor construction and analyses related 4D polytopes, root systems and symmetry groups. 
Section \ref{ACA_Menagerie} contains a summary of all the rank-4 Coxeter groups in the context of the spinor construction.  
This general aspect of the construction is reminiscent of the McKay correspondence, which we discuss in Section \ref{ACA_McKay} together with the trinities, before we conclude in 
Section \ref{ACA_Concl}.


%
	
\section{Mathematical Background} \label{ACA_MathBack}

In this section, we introduce some simple background in the areas of Coxeter groups, root systems and Clifford algebras, which will be all we need to prove the results in this manuscript.

\subsection{Coxeter Groups} \label{ACA_MathBackCox}

\begin{defn}[Coxeter group] A {Coxeter group} is a group generated by some involutive generators $s_i, s_j \in S$ subject to relations of the form $(s_is_j)^{m_{ij}}=1$ with $m_{ij}=m_{ji}\ge 2$ for $i\ne j$. 
\end{defn}
The  finite Coxeter groups have a geometric representation where the involutions are realised as reflections at hyperplanes through the origin in a Euclidean vector space $\mathcal{E}$ and are thus essentially  just the classical reflection groups. In particular, let $(\cdot,  \cdot)$ denote the inner product in $\mathcal{E}$, and $\lambda$, $\alpha\in\mathcal{E}$.  
\begin{defn}[Reflections and roots] The generator $s_\alpha$ corresponds to the {reflection}
\begin{equation}\label{reflect}
s_\alpha: \lambda\rightarrow s_\alpha(\lambda)=\lambda - 2\frac{(\lambda, \alpha)}{(\alpha, \alpha)}\alpha
\end{equation}
 in a hyperplane perpendicular to the  {root vector} $\alpha$.
\end{defn}

The action of the Coxeter group is  to permute these root vectors, and its  structure is thus encoded in the collection  $\Phi\in \mathcal{E}$ of all such roots, which form a root system: 
\begin{defn}[Root system] \label{DefRootSys}
{Root systems} are defined by the  two axioms
\begin{enumerate}
\item $\Phi$ only contains a root $\alpha$ and its negative, but no other scalar multiples: $\Phi \cap \mathbb{R}\alpha=\{-\alpha, \alpha\}\,\,\,\,\forall\,\, \alpha \in \Phi$. 
\item $\Phi$ is invariant under all reflections corresponding to vectors in $\Phi$: $s_\alpha\Phi=\Phi \,\,\,\forall\,\, \alpha\in\Phi$.
\end{enumerate}
\end{defn}

A subset $\Delta$ of $\Phi$, called the {simple roots}, is sufficient to express every element of $\Phi$ via a $\mathbb{Z}$-linear combination with coefficients of the same sign. $\Phi$ is therefore  completely characterised by this basis of simple roots, which in turn completely characterises the Coxeter group.

Here we are primarily interested in the Coxeter groups of ranks 3 and 4. For the crystallographic root systems, the classification in terms of Dynkin diagrams essentially follows the one familiar from Lie groups and Lie algebras, as their Weyl groups are precisely the crystallographic Coxeter groups. A mild generalisation to so-called Coxeter-Dynkin diagrams is necessary for the non-crystallographic groups:  nodes still correspond to simple roots, orthogonal roots are not connected, roots at $\frac{\pi}{3}$ have a simple link, and other angles $\frac{\pi}{m}$ have a link with a label $m$. For instance, the icosahedral group $H_3$ has one link labelled by $5$, as does its four-dimensional analogue $H_4$, and the infinite two-dimensional family $I_2(n)$ (the symmetry groups of the regular $n$-gons) is labelled by $n$. Table \ref{tab_Corr} displays the groups and their diagrams that are relevant to our discussion. 
Table \ref{tab_Summ} contains a summary of the Platonic solids and their symmetry groups, as well as the root systems of those symmetry groups and a choice for the simple roots. Root systems and their Coxeter groups are classified in the same way (sometimes the `Weyl groups' are also denoted $W(\Phi)$), so that we will move quite freely between them in places.

\subsection{Geometric Algebra} \label{ACA_MathBackGA}

The study of Clifford algebras and Geometric Algebra originated with Grassmann's \cite{Grassmann1844LinealeAusdehnungslehre}, Hamilton's \cite{Hamilton1844} and Clifford's \cite{Clifford1878} geometric work. 
However, the geometric content of the algebras was soon lost when interesting algebraic properties were discovered in mathematics, and Gibbs advocated the use of the hybrid system of vector calculus 
 in physics. When Clifford algebras resurfaced in physics in the context of quantum mechanics, it was purely for their algebraic properties, and this continues in particle physics to this day. Thus, it is widely thought that Clifford algebras are somehow intrinsically quantum mechanical in nature. The original geometric meaning of Clifford algebras has been revived in the work of David Hestenes 
\cite{Hestenes1966STA}, \cite{HestenesSobczyk1984} and \cite{Hestenes1990NewFound}. Here, we follow an exposition along the lines of \cite{LasenbyDoran2003GeometricAlgebra}.

In a manner reminiscent of complex numbers carrying both real and imaginary parts in the same algebraic entity, one can consider the 
geometric product of two vectors defined as the sum of their scalar (inner/symmetric) product and  wedge (outer/ exterior/antisymmetric) product
\begin{equation}\label{in2GP}
    ab:= a\cdot b + a\wedge b.
\end{equation}
The wedge product is the outer product introduced by Grassmann as an antisymmetric product of two vectors, which  naturally defines a plane. Unlike the constituent inner and outer products, the geometric product is invertible, as $a^{-1}$ is simply given by $a^{-1}=a/(a^2)$. This leads to many algebraic simplifications over standard vector space techniques, and also feeds through to the differential structure of the theory, with Green's function methods that are not achievable with vector calculus methods.
This geometric product can be extended to the product of more vectors via associativity and distributivity, resulting in higher grade objects called multivectors.   There are a total of $2^n$ elements in the algebra, since it truncates at grade $n$ multivectors due to the scalar nature of the product of  parallel vectors and the antisymmetry of orthogonal vectors. Essentially, a Clifford algebra is a deformation of the exterior algebra by a quadratic form, and for a Geometric Algebra this is the metric of space(time).

The geometric product provides a very compact and efficient way of handling reflections in any number of dimensions, and thus by the Cartan-Dieudonn\'e theorem also rotations.  For a unit vector $n$, we consider the reflection of a vector $a$ in the hyperplane orthogonal to $n$. Thanks to the geometric product, in Clifford algebra the two terms in Eq. (\ref{reflect}) combine into a single term, and thus a `sandwiching prescription': 
\begin{thm}[Reflections]\label{HGA_refl}
In Geometric Algebra, a vector `$a$' transforms under a reflection in the (hyper-)plane defined by a unit normal vector `$n$' as
	\begin{equation}\label{in2refl}
	  a'=-nan.
	\end{equation}
\end{thm}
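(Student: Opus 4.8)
The plan is to verify the sandwiching formula~(\ref{in2refl}) by splitting $a$ into its component along the unit normal $n$ and its component lying in the mirror hyperplane, and then checking that $-nan$ flips the former while fixing the latter --- which is exactly the geometric effect of a reflection in the hyperplane orthogonal to $n$. Concretely, write $a = a_\parallel + a_\perp$ with $a_\parallel = (a\cdot n)\,n$ parallel to $n$ and $a_\perp = a - a_\parallel$ lying in the hyperplane, so that $a_\perp\cdot n = 0$.

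The one substantive step is to read off the (anti)commutation behaviour of $n$ with these two pieces directly from the definition~(\ref{in2GP}) of the geometric product. Since parallel vectors have vanishing wedge, $a_\parallel\wedge n = 0$, so $a_\parallel n = a_\parallel\cdot n = n\cdot a_\parallel = n a_\parallel$ and $n$ commutes with $a_\parallel$; since orthogonal vectors have vanishing dot product, $a_\perp\cdot n = 0$, so $a_\perp n = a_\perp\wedge n = -\,n\wedge a_\perp = -\,n a_\perp$ and $n$ anticommutes with $a_\perp$. Finally $n$ is a unit vector, so $n^2 = n\cdot n = 1$ (equivalently $n^{-1}=n$, so that $-nan = -\,nan^{-1}$ really is a sandwich).

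Assembling these facts,
\[
-nan = -\,n a_\parallel n - n a_\perp n = -\,a_\parallel n^2 + a_\perp n^2 = -\,a_\parallel + a_\perp,
\]
which is precisely the reflection of $a$ in the hyperplane with normal $n$. To connect with the coordinate expression~(\ref{reflect}) one can instead compute in one line using the identity $n\cdot(a\wedge n) = (a\cdot n)n - a$ together with $n\wedge(a\wedge n) = 0$, which gives $nan = n(a\cdot n) + n(a\wedge n) = 2(a\cdot n)n - a$, hence $-nan = a - 2(a\cdot n)n$; this agrees with~(\ref{reflect}) for a unit root $\alpha = n$, where $(\alpha,\alpha)=1$.

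There is essentially no obstacle to overcome here: the only point requiring a little care is to derive the commutation and anticommutation relations from the grade structure of the geometric product rather than positing them. It is worth stressing, as the statement does, that the argument never refers to the dimension of $\mathcal{E}$, so the same formula encodes reflections in arbitrarily many dimensions --- which is exactly what makes the later Cartan-Dieudonn\'e composition of such reflections into rotors so clean.
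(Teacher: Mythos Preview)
Your proof is correct. The paper does not actually supply a proof of this theorem; it only remarks, just before stating it, that ``the two terms in Eq.~(\ref{reflect}) combine into a single term, and thus a `sandwiching prescription'.'' Your second computation, showing directly that $-nan = a - 2(a\cdot n)n$ and hence recovers~(\ref{reflect}) for a unit root, is precisely the verification the paper gestures at but omits; the parallel/perpendicular decomposition you give first is the standard geometric argument and is a welcome addition.
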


This is a remarkably compact and simple prescription for reflecting vectors in hyperplanes. More generally, higher grade multivectors of the form $M= ab\dots c$ (so-called versors) transform similarly (`covariantly'), as $M= ab\dots c\rightarrow \pm nannbn\dots ncn=\pm nab\dots cn=\pm nMn$. Even more importantly, from the  Cartan-Dieudonn\'e theorem, rotations are the product of successive reflections. For instance, compounding the reflections in the hyperplanes defined by the unit vectors $n$ and $m$ results in a rotation in the plane defined by $n\wedge m$.
\begin{prop}[Rotations]\label{HGA_rot}
In Geometric Algebra, a vector `$a$' transforms under a rotation in the plane defined by $n\wedge m$ via successive reflection in hyperplanes determined by the unit vectors `$n$' and `$m$' as 
	\begin{equation}\label{in2rot}
	  a''=mnanm=: Ra\tilde{R},
	\end{equation}
where we have defined $R=mn$ and the tilde denotes the reversal of the order of the constituent vectors $\tilde{R}=nm$.
\end{prop}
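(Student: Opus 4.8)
The plan is to derive the rotation formula by composing the reflection formula of Theorem~\ref{HGA_refl} with itself. First I would reflect $a$ in the hyperplane with unit normal $n$, which by Eq.~(\ref{in2refl}) gives $a' = -nan$. Then I would reflect $a'$ in the hyperplane with unit normal $m$; applying the same theorem once more yields $a'' = -m a' m = -m(-nan)m = mnanm$. Setting $R := mn$ and noting that its reversal is $\tilde R = nm$, this is precisely $a'' = R a \tilde R$, the claimed expression. So the ``derivation'' is really just two applications of the preceding theorem.

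The next step is to check that $R$ is a genuine rotor and that the sandwiching is a well-defined orthogonal map. Since $n$ and $m$ are unit vectors, $n^2 = m^2 = 1$ in the geometric algebra, and hence $R\tilde R = (mn)(nm) = m n^2 m = m^2 = 1$ (and symmetrically $\tilde R R = 1$), so $R$ is invertible with $R^{-1} = \tilde R$. This makes $a \mapsto R a \tilde R$ grade-preserving (it is, up to the overall sign, the versor action noted after Theorem~\ref{HGA_refl}) and, being the composite of two reflections, it has determinant $(-1)(-1) = +1$; by the Cartan-Dieudonn\'e theorem it is therefore a rotation.

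Finally, to justify calling it ``a rotation in the plane defined by $n\wedge m$'', I would split an arbitrary vector as $a = a_\parallel + a_\perp$, with $a_\parallel$ in the plane spanned by $n$ and $m$ and $a_\perp$ orthogonal to both. Using the fundamental identity $nm + mn = 2(n\cdot m)$, one checks that $a_\perp$ commutes with the even element $R = mn$ (it anticommutes with each of $n$ and $m$ separately), so $R a_\perp \tilde R = R\tilde R\, a_\perp = a_\perp$ is fixed, while $a_\parallel$ anticommutes with the bivector part of $R$, giving $R a_\parallel \tilde R = R^2 a_\parallel$. Writing $mn = \cos\theta - \hat B \sin\theta$ with $\hat B$ the unit bivector of the $n$--$m$ plane and $\theta$ the angle between $n$ and $m$, we get $R^2 = \cos 2\theta - \hat B \sin 2\theta$, i.e.\ rotation by $2\theta$ in that plane, recovering the classical fact that composing two reflections produces a rotation by twice the dihedral angle.

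The computation has essentially no obstacle; the only thing requiring a little care is the sign bookkeeping and the commutation relations between a vector and the bivector $mn$, all of which follow immediately from the defining identity of the geometric product and from Theorem~\ref{HGA_refl} itself.
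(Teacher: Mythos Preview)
Your proposal is correct and follows the same underlying idea as the paper: apply Theorem~\ref{HGA_refl} twice and invoke Cartan--Dieudonn\'e. The paper, however, does not give an explicit proof of this proposition at all; it simply states the result, having remarked in the preceding paragraph that ``from the Cartan--Dieudonn\'e theorem, rotations are the product of successive reflections'' and that compounding the reflections in $n$ and $m$ gives a rotation in the plane $n\wedge m$. Your derivation is therefore considerably more detailed than the paper's treatment: the verification that $R\tilde R=1$, the decomposition $a=a_\parallel+a_\perp$, and the identification of the rotation angle as $2\theta$ are all things the paper leaves implicit (or defers to the cited textbook \cite{LasenbyDoran2003GeometricAlgebra}). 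None of this is wrong or wasted---it is exactly what one would want in a self-contained proof---but be aware that for the paper this proposition functions more as a definition/observation than as a theorem requiring argument.
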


\begin{thm}[Rotors and spinors]\label{HGA_thm_Rotor}
The object $R=mn$ generating the rotation in Eq. (\ref{in2rot}) is called a rotor. It satisfies $\tilde{R}R=R\tilde{R}=1$. Rotors themselves transform single-sidedly under further rotations, and thus form a multiplicative group under the geometric product, called the rotor group. Since $R$ and $-R$ encode the same rotation, the rotor group is a double-cover of the special orthogonal group, and is thus essentially the Spin group. Objects in Geometric Algebra that transform single-sidedly are called spinors, so that rotors are  normalised spinors. 
\end{thm}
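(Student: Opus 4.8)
The plan is to break the statement into four pieces: the normalisation $\tilde{R}R=R\tilde{R}=1$, the group axioms, the two-to-one map onto the rotation group, and finally the identification with $\Spin(n)$ together with the spinor terminology.

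First I would dispatch the normalisation by direct computation. Since $m$ and $n$ are unit vectors, $m^2=m\cdot m=1$ and $n^2=1$ (the wedge part of $mm$ vanishes by antisymmetry), so with $R=mn$ and $\tilde{R}=nm$ one gets $\tilde{R}R=(nm)(mn)=n(m^2)n=n^2=1$ and likewise $R\tilde{R}=m(n^2)m=1$. By the Cartan--Dieudonn\'e theorem the general rotor is a product of an even number of reflections, $R=v_1v_2\cdots v_{2k}$ with each $v_i$ a unit vector; then $\tilde{R}=v_{2k}\cdots v_2 v_1$ and $\tilde{R}R=1$ follows by telescoping the squares $v_i^2=1$ from the inside out, which also shows $R$ is invertible with $R^{-1}=\tilde{R}$.

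Next I would treat the group structure and the single-sided transformation together. Closure under the geometric product holds because the product of $R=v_1\cdots v_{2k}$ and $S=w_1\cdots w_{2l}$ is again an even product of unit vectors; associativity is inherited from the algebra; the identity is $1=nn$; and the inverse of $R$ is $\tilde{R}$, itself an even product of unit vectors and hence a rotor. For the single-sided behaviour, composing the rotation generated by $R$ with a further one generated by $S$ sends a vector $a\mapsto Ra\tilde{R}\mapsto S(Ra\tilde{R})\tilde{S}=(SR)a\widetilde{(SR)}$, so the composite rotation is generated by the plain product $SR$ --- rotors multiply, in contrast to the two-sided sandwich by which vectors transform; this is exactly what it means to say rotors, and more generally spinors, transform single-sidedly, and it is what makes the set of rotors closed under multiplication.

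Then I would set up the homomorphism $\rho\colon R\mapsto(a\mapsto Ra\tilde{R})$ on vectors. One checks $Ra\tilde{R}$ is again grade $1$ (conjugation by each unit vector is a reflection, which preserves grade, and $\rho(R)$ is a composite of these) and that it preserves the quadratic form, $(Ra\tilde{R})^2=Ra\tilde{R}Ra\tilde{R}=Ra^2\tilde{R}=a^2$, so $\rho(R)\in O(n)$; since it is a composite of an even number of reflections each of determinant $-1$, in fact $\rho(R)\in SO(n)$, and $\rho$ is a homomorphism by associativity. Surjectivity onto $SO(n)$ is once more Cartan--Dieudonn\'e. For the kernel, $\rho(R)=\mathrm{id}$ forces $Ra=aR$ for all vectors $a$, hence $R$ commutes with the whole algebra; an even multivector that is central must be a scalar, and $\tilde{R}R=1$ then gives $R^2=1$, i.e. $R=\pm1$. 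Thus $\ker\rho=\{-1,+1\}$, exhibiting the rotor group as a double cover of $SO(n)$, which is (a group isomorphic to) $\Spin(n)$. Finally, objects transforming by the single factor $R$ rather than the sandwich are, by definition, spinors, and a rotor is precisely such a spinor normalised by $\tilde{R}R=1$. The one genuinely non-routine point I expect is the kernel computation --- the fact that an even element commuting with every vector is necessarily a scalar, which is what pins the cover down to exactly two-to-one; everything else is a one-line identity or a direct appeal to Cartan--Dieudonn\'e.
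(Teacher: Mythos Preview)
Your proof is correct and carefully argued. However, the paper does not actually supply a proof of this theorem: it is stated as a piece of standard background in geometric algebra, functioning largely as a package of definitions (rotor, rotor group, spinor) together with well-known facts, and the text moves on immediately to the covariance of higher multivectors without justification of the claims in the theorem itself.

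So there is no paper proof to compare against. Your argument is substantially more rigorous than anything the paper offers here: the direct verification of $\tilde{R}R=1$, the explicit check of the group axioms, the construction of the homomorphism $\rho$ onto $SO(n)$ via Cartan--Dieudonn\'e, and in particular the kernel computation showing $\ker\rho=\{\pm 1\}$ are all things the paper simply assumes. Your identification of the kernel step as the only non-routine point is apt; the claim that an even element commuting with every vector must be scalar is the genuine content, and it is correct (in any signature and dimension the even centraliser of the vector space inside $\Cl(n)$ is the scalars). Everything else, as you say, is a one-line identity or Cartan--Dieudonn\'e.
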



Higher multivectors transform in the above covariant, double-sided way as $ MN\rightarrow (RM\tilde{R})(R N \tilde{R})=RM\tilde{R}R N \tilde{R}=R(MN)\tilde{R}$.

	The Geometric Algebra of three dimensions $\Cl(3)$ spanned by three orthogonal (thus anticommuting) unit vectors $e_1$, $e_2$ and $e_3$ contains three bivectors $e_1e_2$, $e_2e_3$ and $e_3e_1$ that square to $-1$, as well as the  highest grade object $e_1e_2e_3$   (trivector and pseudoscalar), which also squares to $-1$.
	\begin{equation}\label{in2PA}
	  \underbrace{\{1\}}_{\text{1 scalar}} \,\,\ \,\,\,\underbrace{\{e_1, e_2, e_3\}}_{\text{3 vectors}} \,\,\, \,\,\, \underbrace{\{e_1e_2=Ie_3, e_2e_3=Ie_1, e_3e_1=Ie_2\}}_{\text{3 bivectors}} \,\,\, \,\,\, \underbrace{\{I\equiv e_1e_2e_3\}}_{\text{1 trivector}}.
	\end{equation}

\begin{thm}[Quaternions and spinors of $\Cl(3)$]\label{HGA_quatBV}
The unit spinors $\lbrace 1,-Ie_1, -Ie_2, -Ie_3\rbrace$ of $\Cl(3)$ are isomorphic to the quaternion algebra $\mathbb{H}$. 
\end{thm}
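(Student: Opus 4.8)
The plan is to recognise that the real span of $\lbrace 1,-Ie_1,-Ie_2,-Ie_3\rbrace$ is nothing but the even subalgebra $\Cl^+(3)$ of $\Cl(3)$, and then to exhibit an explicit basis of it satisfying Hamilton's defining relations. First I would use the anticommutation rules $e_je_k=-e_ke_j$ for $j\neq k$ together with $e_k^2=1$ to rewrite the pseudoscalar multiples $Ie_k$ as bivectors: a one-line computation gives $Ie_1=e_2e_3$, $Ie_2=e_3e_1$, $Ie_3=e_1e_2$, so that the proposed set is $\lbrace 1,-e_2e_3,-e_3e_1,-e_1e_2\rbrace$, namely $1$ together with the negatives of the three bivectors of Eq. (\ref{in2PA}). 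These four elements are $\mathbb{R}$-linearly independent, and their span is precisely the grade-$0$-plus-grade-$2$ part of $\Cl(3)$, which is closed under the geometric product; hence it is a $4$-dimensional associative unital $\mathbb{R}$-algebra, in accordance with the fact that spinors in $\Cl(3)$ are exactly the even multivectors.

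Next I would set $i:=-Ie_1=-e_2e_3$, $j:=-Ie_2=-e_3e_1$, $k:=-Ie_3=-e_1e_2$ and check the quaternionic multiplication table directly. For example $i^2=(-e_2e_3)(-e_2e_3)=e_2e_3e_2e_3=-e_2^2e_3^2=-1$, and $ij=(-e_2e_3)(-e_3e_1)=e_2e_3e_3e_1=e_2e_1=-e_1e_2=k$, with the remaining products obtained by cyclic relabelling. Collecting these yields $i^2=j^2=k^2=ijk=-1$, which is exactly the standard presentation of $\mathbb{H}$.

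Finally I would conclude by a dimension count / universal property argument: the $\mathbb{R}$-linear map $\mathbb{H}\to\Cl^+(3)$ determined by $1\mapsto 1$ and by sending the quaternion units to $i,j,k$ carries the standard $\mathbb{R}$-basis of $\mathbb{H}$ to a basis of $\Cl^+(3)$, so it is a linear isomorphism; since it respects products on the generators and both algebras are associative, it is an algebra homomorphism, hence an algebra isomorphism.

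The computation has no genuine obstacle; the only point requiring care is the bookkeeping of signs, and it is worth noting \emph{why} the particular signs $-Ie_k$ (rather than $+Ie_k$) appear: they are chosen so that the products close as $ij=k$ in Hamilton's convention rather than $ij=-k$. With the opposite choice one would land on the opposite algebra $\mathbb{H}^{\mathrm{op}}$, which is itself isomorphic to $\mathbb{H}$, so the statement is insensitive to this convention. Conceptually, the lemma is the familiar identification $\Cl^+(3)\cong\Cl(0,2)\cong\mathbb{H}$, and restricting to normalised elements identifies the rotor group of $\Cl(3)$ with the unit quaternions $Sp(1)=\Spin(3)$, tying this statement back to Theorem \ref{HGA_thm_Rotor}.
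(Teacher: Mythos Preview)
Your proof is correct and complete. The paper itself does not supply a proof of this theorem; it is stated as a known fact and immediately followed by the remark that results can be translated into the language of quaternions, with references to Hestenes and Doran--Lasenby for background. Your explicit verification of Hamilton's relations from the anticommutation rules, together with the dimension count, is exactly the standard argument one would give, and your observation about the sign convention and the identification $\Cl^+(3)\cong\mathbb{H}$ (and hence of the rotor group with $\Spin(3)$) is precisely the conceptual content the paper relies on downstream.
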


Most of the results we will derive in this manuscript are therefore readily translated into the language of quaternions. However, we will refrain from doing so at every step and instead advocate the geometric approach in terms of spinors. This offers a new coherent picture, from which the plethora of loosely connected results without geometric insight from the literature follows in a straightforward and uniform way.

\subsection{3D root systems induce 4D root systems} \label{ACA_MathBackInd}

The following is a summary of Ref.  \cite{Dechant2012Induction} which proves that every root system in three dimensions induces a root system in four dimensions in completely general terms, using only the Coxeter and Clifford frameworks, but making no reference to any specific root system. The remainder of this manuscript in turn considers the implications of this general statement for the concrete list of root systems in three and four dimensions, including novel links between Arnol'd's trinities and with the McKay correspondence, as well as explaining for the first time the otherwise mysterious structure of the automorphism groups of these root systems. 

The argument in this section is that each root system in 3D allows one to find an even discrete spinor group from the Coxeter reflection root vectors via the geometric product. Because of the spinors' $O(4)$-structure, this spinor group can be reinterpreted as a set of 4D vectors, for which one can then show  the root system axioms to hold.

\begin{prop}[$O(4)$-structure of spinors]\label{HGA_O4}
The space of $\Cl(3)$-spinors can be endowed with an inner product and a norm giving it a 4D Euclidean signature. For two spinors $R_1$ and $R_2$, this is given by 	$(R_1,R_2)=\frac{1}{2}(R_1\tilde{R}_2+R_2\tilde{R}_1)$. 
\end{prop}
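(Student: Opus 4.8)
The plan is to work with the explicit description of a $\Cl(3)$-spinor as a general even-grade multivector, $R = a_0 + a_1 e_2e_3 + a_2 e_3e_1 + a_3 e_1e_2$ with $a_i\in\mathbb{R}$; this already exhibits the space of spinors as a four-dimensional real vector space, spanned by $\{1, e_2e_3, e_3e_1, e_1e_2\}$ (equivalently $\{1, -Ie_1, -Ie_2, -Ie_3\}$), so the content of the claim is that the stated bilinear form is a genuine Euclidean inner product on that space. Recalling that reversal fixes grades $0$ and $1$ and negates grades $2$ and $3$, one has $\tilde R = a_0 - a_1 e_2e_3 - a_2 e_3e_1 - a_3 e_1e_2$.

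First I would show that $(R_1,R_2)$ is scalar-valued. Since reversal is an anti-automorphism with $\tilde{\tilde R}=R$, the even multivector $M:=R_1\tilde R_2$ satisfies $\tilde M = R_2\tilde R_1$, so $(R_1,R_2)=\frac{1}{2}(M+\tilde M)$ equals the grade-$0$ part of $M$ — the grade-$2$ part cancelling — and is therefore a real number. This at once yields symmetry (immediate from the defining expression) and $\mathbb{R}$-bilinearity (the geometric product being bilinear, reversal linear), and it identifies $(R,R)=R\tilde R$ with the standard spinor norm. Positive-definiteness and the signature then follow from a short computation on the basis: $(e_ie_j)\widetilde{(e_ie_j)}=1$ for each basis bivector, while for two distinct basis bivectors $(e_ie_j)\widetilde{(e_ke_l)}$ is again a bivector and hence has vanishing scalar part, and each bivector is orthogonal to $1$. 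Thus the Gram matrix is the $4\times4$ identity, and $(R,R)=a_0^2+a_1^2+a_2^2+a_3^2\geq 0$ with equality iff $R=0$. (Alternatively, one may invoke the isomorphism $\Cl^+(3)\cong\mathbb{H}$ from Theorem \ref{HGA_quatBV} and observe that the form is exactly the quaternionic norm.)

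There is no serious obstacle here; the one point requiring care is the grade bookkeeping under reversal — in particular that on the even subalgebra reversal acts as $+1$ on scalars and $-1$ on bivectors, which is precisely what makes the symmetrised product $\frac{1}{2}(R_1\tilde R_2+R_2\tilde R_1)$ collapse onto its scalar part — together with the remark that positive-definiteness genuinely relies on the Euclidean metric $e_i^2=+1$ of $\Cl(3)$ and would fail for an indefinite signature.
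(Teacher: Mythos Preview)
Your proposal is correct and follows essentially the same approach as the paper: write a general spinor in components and verify directly that $(R,R)=R\tilde R=a_0^2+a_1^2+a_2^2+a_3^2$. The paper's own proof stops at that single line, whereas you additionally check that $(R_1,R_2)$ lands in the scalar part, is symmetric and bilinear, and that the basis is orthonormal --- worthwhile details the paper leaves implicit, but not a different method.
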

\begin{proof}
For a spinor $R=a_0+a_1Ie_1+a_2Ie_2+a_3Ie_3$, this gives $(R,R)=R\tilde{R}=a_0^2+a_1^2+a_2^2+a_3^2$, as required. 
	
\end{proof}

\begin{cor}[3D spinors and 4D vectors]\label{HGA_spinvec}
A spinor in three dimensions induces a vector in four dimensions by mapping the spinor components into the 4D Euclidean space as just defined in Proposition \ref{HGA_O4}. A discrete spinor group thus gives rise to a set of vertex vectors that can be interpreted as a 4D polytope.
\end{cor}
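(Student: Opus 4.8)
The plan is to observe that Corollary~\ref{HGA_spinvec} is essentially a restatement of Proposition~\ref{HGA_O4} together with an elementary remark about finite subsets of Euclidean space, so that little is needed beyond making the identifications explicit. First I would write a general $\Cl(3)$-spinor in the basis of Eq.~(\ref{in2PA}) as $R=a_0+a_1Ie_1+a_2Ie_2+a_3Ie_3$ and record that the assignment $R\mapsto(a_0,a_1,a_2,a_3)$ is a linear bijection from the four-dimensional real vector space of spinors onto $\mathbb{R}^4$. By the computation in the proof of Proposition~\ref{HGA_O4}, namely $(R,R)=R\tilde R=a_0^2+a_1^2+a_2^2+a_3^2$, extended by polarisation from $(R_1,R_2)=\frac{1}{2}(R_1\tilde R_2+R_2\tilde R_1)$, this bijection carries the spinor inner product to the standard Euclidean inner product on $\mathbb{R}^4$. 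Hence it is an isometry, and each individual spinor genuinely \emph{is} a vector in a 4D Euclidean space; this proves the first sentence.

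Next I would pass from a single spinor to a discrete spinor group $G$ -- in the cases of interest, the even subgroup generated via the geometric product from the finitely many reflection root vectors of a 3D Coxeter group, which is itself a finite group. Its image under the map above is then a finite set of points in $\mathbb{R}^4$. I would record two structural features that make this point set a well-behaved polytope: (i) every element of $G$ is a product of unit vectors, hence a rotor, so by Theorem~\ref{HGA_thm_Rotor} it satisfies $\tilde R R=1$, i.e. $(R,R)=1$, so all induced vectors lie on the unit $3$-sphere and are vertices of their convex hull rather than interior points; and (ii) since $R$ and $-R$ encode the same rotation, $G$ contains $-1$ and is closed under the antipodal map, so the vertex set occurs in antipodal pairs. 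The convex hull of this finite subset of $S^3$ is then by definition a 4D polytope, and left multiplication by $G$ -- an isometry of the 4D space by (i) -- permutes the vertices transitively, so the polytope carries a large symmetry group. This is the content of the second sentence.

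I do not expect a genuine obstacle; the only care required is bookkeeping. One should confirm that the component map realising the Euclidean structure of Proposition~\ref{HGA_O4} is the \emph{same} one used to define the induced vectors, which is immediate from the displayed norm formula, and one should read ``discrete spinor group'' as a finite group so that the output is a bounded polytope with finitely many vertices -- automatic for the groups arising from the finite 3D Coxeter groups treated in the sequel. With these remarks the corollary follows with essentially no further computation; the substantive geometric work is deferred to the explicit case-by-case analysis of Section~\ref{ACA_Plato}, and the verification that these vertex sets are moreover root systems is the separate result of Ref.~\cite{Dechant2012Induction} summarised above.
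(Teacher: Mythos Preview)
Your proposal is correct and in fact provides considerably more detail than the paper itself, which states the corollary without any proof at all, treating it as an immediate consequence of Proposition~\ref{HGA_O4}. Your additional observations about the unit-norm property from Theorem~\ref{HGA_thm_Rotor}, antipodality, and transitivity under left multiplication are all valid and useful, but go beyond what the paper deems necessary at this point; the paper defers any such structural remarks to the later case-by-case discussion and to Theorem~\ref{HGAsymmetry}.
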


This is in fact already enough for most of our results about the four-dimensional counterparts of the Platonic solids, including their construction and symmetries. However, it is interesting that one can in fact also show the stronger statement that these polytopes have to be root systems and therefore induce Coxeter groups of rank 4.

\begin{lem}[Reflections in 4D]\label{HGA_4Drefl}
A reflection of the vector in the 4-dimensional space corresponding to the spinor $R_2$ under the norm in Proposition \ref{HGA_O4} in the vector corresponding to $R_1$ is given by $R_2\rightarrow R_2'=-R_1\tilde{R}_2R_1/(R_1\tilde{R}_1)$.
\end{lem}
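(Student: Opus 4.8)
The plan is to check directly that the prescription $R_2\mapsto -R_1\tilde R_2 R_1/(R_1\tilde R_1)$ is nothing but the abstract reflection formula of Eq.~(\ref{reflect}), applied in the Euclidean $4$-space of Proposition~\ref{HGA_O4} with the root taken to be $R_1$ and the reflected vector $R_2$. Since Eq.~(\ref{reflect}) \emph{defines} the reflection in the hyperplane perpendicular to $R_1$, verifying this algebraic identity is all that is needed; the fact that the map is an involutive isometry then follows automatically.

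Two elementary observations about $\Cl(3)$-spinors do the work. First, for any spinor $R$ the quantity $R\tilde R$ is a (non-negative) scalar equal to $(R,R)$ -- this was already used in the proof of Proposition~\ref{HGA_O4} -- and since scalars are invariant under reversal one also has $\tilde R R = R\tilde R$, i.e. it is central. Second, the even subalgebra spanned by $\{1,Ie_1,Ie_2,Ie_3\}$ is closed under the geometric product, so $R_1\tilde R_2 R_1$ is again a spinor and hence corresponds to an honest vector in the $4$D space; thus (assuming $R_1\neq 0$, which holds for a root) the right-hand side is well defined.

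Substituting into Eq.~(\ref{reflect}) with $(\alpha,\alpha)=R_1\tilde R_1$ and, by Proposition~\ref{HGA_O4}, $2(R_2,R_1)=R_2\tilde R_1+R_1\tilde R_2$, the image of $R_2$ is
\[
R_2-\frac{(R_2\tilde R_1+R_1\tilde R_2)R_1}{R_1\tilde R_1}=R_2-\frac{R_2(\tilde R_1R_1)+R_1\tilde R_2R_1}{R_1\tilde R_1}.
\]
Because $\tilde R_1R_1=R_1\tilde R_1$ is a central scalar, the first term of the numerator contributes exactly $R_2$, which cancels the leading $R_2$, leaving $-R_1\tilde R_2R_1/(R_1\tilde R_1)$, as claimed.

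There is no real obstacle here; the only care needed is the bookkeeping with reversal (that it reverses products, $\widetilde{AB}=\tilde B\tilde A$, and squares to the identity) together with the centrality of $R_1\tilde R_1$. If one wishes to see explicitly that the map preserves the norm of Proposition~\ref{HGA_O4}, the short computation $R_2'\tilde R_2'=(R_1\tilde R_2R_1)(\tilde R_1R_2\tilde R_1)/(R_1\tilde R_1)^2=(R_1\tilde R_2R_2\tilde R_1)/(R_1\tilde R_1)=R_2\tilde R_2$ -- using the first observation twice -- confirms that it is an isometry, hence genuinely the reflection announced.
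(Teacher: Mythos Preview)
Your argument is correct and follows essentially the same route as the paper: substitute the inner product of Proposition~\ref{HGA_O4} into the abstract reflection formula~(\ref{reflect}) and use that $\tilde R_1R_1=R_1\tilde R_1$ is a central scalar to cancel the $R_2$ term. Your write-up is in fact more explicit than the paper's about why the cancellation works and adds the (optional) isometry check, but the underlying computation is identical.
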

\begin{proof}
	For spinors $R_1$ and $R_2$, the reflection formula (\ref{reflect}) gives
	$R_2\rightarrow R_2'=R_2-2(R_1, R_2)/(R_1, {R}_1) R_1 =R_2-((R_1\tilde{R}_2+R_2\tilde{R}_1) R_1/(R_1\tilde{R}_1)=  -R_1\tilde{R}_2R_1/(R_1\tilde{R}_1)$.
\end{proof}

	In fact, we are mostly interested in unit spinors, for which this simplifies to $-R_1\tilde{R}_2R_1$. It is easily verified in terms of components that this is indeed the same as the usual reflection of 4-dimensional vectors. 

\begin{thm}[Induced root systems in 4D]\label{HGA_4Drootsys}
A 3D root system gives rise to an even spinor group which induces a root system in 4D.
\end{thm}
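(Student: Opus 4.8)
The plan is to make the ``even spinor group'' explicit, transport it to four dimensions via Corollary~\ref{HGA_spinvec}, and then verify the two root-system axioms of Definition~\ref{DefRootSys} for the resulting set of 4D vectors directly, using the sandwiching formula of Lemma~\ref{HGA_4Drefl}. After rescaling the 3D roots to unit vectors (which changes neither $\Phi$ as a root system nor its Coxeter group), the geometric product $\alpha\beta$ of any two roots is a unit spinor, since $\widetilde{\alpha\beta}\,\alpha\beta=\beta\alpha\alpha\beta=1$. Let $G$ be the group generated by all such products under the geometric product. It consists of even elements, so it is a subgroup of the unit spinors -- the unit quaternions of Theorem~\ref{HGA_quatBV} -- and under the rotor-to-rotation map it maps onto the (finite) group of rotations inside the Coxeter group $W(\Phi)$ with kernel contained in $\{\pm1\}$; hence $G$ is finite. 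Moreover $-1\in G$, because $-\alpha\in\Phi$ for every root $\alpha$ by Definition~\ref{DefRootSys} and $\alpha(-\alpha)=-\alpha^2=-1$.

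Applying Corollary~\ref{HGA_spinvec} reinterprets the finite set $G$ as a finite set $\Phi'$ of vectors in the 4D Euclidean space of Proposition~\ref{HGA_O4}, each of norm $1$. Axiom~1 is then immediate: if $\lambda R\in\Phi'$ with $R\in\Phi'$ and $\lambda\in\mathbb{R}$, then $\lambda R$ is again a unit spinor and so $\lambda=\pm1$; both $R$ and $-R=(-1)R$ belong to $\Phi'$ since $-1\in G$, and $R\ne-R$, so $\Phi'\cap\mathbb{R}R=\{R,-R\}$. For Axiom~2 I would invoke Lemma~\ref{HGA_4Drefl}: since all the spinors are normalised, the reflection of $R_2\in\Phi'$ in the vector $R_1\in\Phi'$ is $R_2\mapsto -R_1\tilde R_2 R_1$. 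Because $G$ is a group, $\tilde R_2=R_2^{-1}\in G$, and with $-1\in G$ and $R_1\in G$ the element $-R_1\tilde R_2 R_1$ is a product of members of $G$, hence lies in $G=\Phi'$. Thus $\Phi'$ is invariant under all of its own reflections, both axioms hold, $\Phi'$ is a 4D root system, and the reflection group it generates is a rank-4 Coxeter group.

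Each step is short, so the work lies in assembling the right ingredients rather than in any one computation; the two facts the argument really turns on -- and which should be pinned down before the closure in Axiom~2 -- are that rotors are normalised so that reversal is the group inverse ($\tilde R=R^{-1}$, Theorem~\ref{HGA_thm_Rotor}), and that $-1$ genuinely lies in $G$, without which one would only obtain closure of $\Phi'$ up to a sign. The remaining care is bookkeeping: that ``reflection in the 4D vector $R_1$'' really is the double-sided sandwich $-R_1\tilde R_2 R_1$ of Lemma~\ref{HGA_4Drefl} and not an adjoint action treating $R_1$ as a 3D rotor, and -- if one wants the slightly stronger statement that $\Phi'$ spans -- noting that $1$ together with the bivector parts of the products $\alpha\beta$ already exhausts the four-dimensional even subalgebra.
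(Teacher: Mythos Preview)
Your proof is correct and follows essentially the same route as the paper: build the discrete spinor group $G$, reinterpret it as a set $\Phi'$ of 4D vectors, and then verify the two root-system axioms via $-1\in G$ and the group closure property applied to the sandwich formula of Lemma~\ref{HGA_4Drefl}. You are more explicit than the paper on several points it leaves implicit---why $G$ is finite, why $-1\in G$ (your $\alpha(-\alpha)=-1$ argument), and why unit norm rules out other scalar multiples---but the skeleton of the argument is the same.
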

\begin{proof}
Check the two axioms for root systems for $\Phi$ given by the set of 4D vectors induced by a spinor group.
\begin{enumerate}
\item By construction, $\Phi$ contains the negative of a root since if $R$ is in a spinor group $G$, then so is $-R$ (c.f. Theorem \ref{HGA_thm_Rotor}), but no other scalar multiples. 
\item $\Phi$ is invariant under all reflections given by Lemma \ref{HGA_4Drefl}	since $R_2'=-R_1\tilde{R}_2R_1/(R_1\tilde{R}_1)\in G$ if $R_1, R_2 \in G$ by the closure property of the group $G$ (in particular $\tilde{R}$ is in $G$ if $R$ is). 
\end{enumerate}

\end{proof}

The spinorial nature of these induced root systems is thus critical for the understanding of the closure property -- in particular, it is immediately obvious why $|\Phi|=|G|$ -- and we shall see later that it is also crucial for the analysis of the automorphism groups of these polytopes.

\section{Platonic Relationships} \label{ACA_Plato}

We now turn to concrete examples of three-dimensional root systems and consider which four-dimensional polytopes they induce.

\subsection{The Platonic Solids, Reflection Groups and Root Systems}

We start with the symmetry groups of the Platonic solids $A_3$ (tetrahedron), $B_3$ (octahedron and cube) and $H_3$ (icosahedron and dodecahedron).   The induced polytopes are the 24-cell, which generates the Coxeter group $D_4$ from $A_3$, the root system of $F_4$ from $B_3$, and the 600-cell (the root system of $H_4$) from $H_3$. 
The group $A_1\times A_1 \times A_1$ is also a symmetry of the tetrahedron, which is found to induce the 16-cell, which is the root system of $A_1\times A_1 \times A_1\times A_1$.

The three simple roots of the Coxeter groups are in fact sufficient to generate the entire root systems. The root vectors encoding reflections are then combined to give spinors, as by Cartan-Dieudonn\'e a rotation is an even number of reflections. 

\begin{thm}[Reflections/Coxeter groups and polyhedra/root systems]
Take the three simple roots for the Coxeter group $A_1\times A_1\times A_1$ (respectively $A_3$/$B_3$/$H_3$). Geometric Algebra reflections in the hyperplanes orthogonal to these vectors via Eq. (\ref{in2refl}) generate further vectors pointing to the 6 (resp. 12/18/30) vertices of an octahedron (resp. cuboctahedron/cuboctahedron with an  octahedron/icosidodecahedron), giving the full root system of the group.
\end{thm}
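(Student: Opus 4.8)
The plan is to reduce the statement to three facts already at hand, plus one short explicit check. The facts are: (i) by Theorem~\ref{HGA_refl} the Geometric Algebra reflection $a\mapsto-\alpha a\alpha$ in the hyperplane orthogonal to a unit vector $\alpha$ agrees with the linear reflection $s_\alpha$ of Eq.~(\ref{reflect}); (ii) for a finite root system $\Phi$ with simple system $\Delta$ the whole of $\Phi$ is the orbit $W\cdot\Delta$, where $W=\langle s_\alpha:\alpha\in\Delta\rangle$ (for the reducible case $A_1\times A_1\times A_1$ apply this factor by factor, the single simple root of each $A_1$ already generating that factor's two roots); and (iii) explicit unit simple roots as given in Table~\ref{tab_Summ}, namely an orthonormal triple for $A_1\times A_1\times A_1$ and triples realising the $A_3$, $B_3$, $H_3$ Coxeter--Dynkin diagrams for the other three.

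First I would fix these simple roots $\alpha_1,\alpha_2,\alpha_3$ and form the set obtained by starting from $\{\alpha_1,\alpha_2,\alpha_3\}$ and repeatedly applying $a\mapsto-\alpha_i a\alpha_i$ for $i=1,2,3$ until it closes up (this terminates because the Coxeter group is finite). By (i) this closure equals the orbit of $\Delta$ under the simple reflections, which by (ii) is $\Phi$; allowing in addition reflections in the hyperplanes orthogonal to the already-generated vectors does not enlarge it, since $\Phi$ is closed under all its own reflections by axiom~2 of Definition~\ref{DefRootSys}. The counts $|\Phi|=6,12,18,30$ then follow from the orbit computation (equivalently from the classification recorded in Table~\ref{tab_Summ}).

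It remains to recognise the generated point set as the named polyhedron. For $A_1\times A_1\times A_1$ the roots are $\pm e_1,\pm e_2,\pm e_3$, the six vertices of an octahedron. For $A_3$, which is simply laced, all $12$ roots have equal length and are the $W(A_3)\cong S_4$ images of the chosen simple roots; in suitable coordinates they are the vectors of $\pm e_i\pm e_j$ type, the $12$ vertices of a cuboctahedron. For $B_3$ the $18$ roots split by length into the $6$ short roots $\pm e_i$ (an octahedron) and the $12$ long roots $\pm e_i\pm e_j$ (a cuboctahedron), i.e.\ a cuboctahedron together with an octahedron. For $H_3$ the $30$ equal-length roots are the vertices of an icosidodecahedron.

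The main obstacle is this last, geometric step: showing the orbit is not merely \emph{some} configuration of the right cardinality but precisely the stated Platonic/Archimedean solid. Rather than tracking each reflected vector by hand, the clean route for $A_3$, $B_3$, $H_3$ is to use $W$-invariance of the generated set, its partition into length-classes, the cardinality, and transitivity of $W$ on each length-class (one class for $A_3$ and $H_3$, two for $B_3$): these pin the set down to the unique $W$-orbit, up to overall scale, of a vector in the relevant class, which one then identifies with the claimed polyhedron from its standard vertex description (for $H_3$ this brings in the golden-ratio coordinates of the icosidodecahedron). The case $A_1\times A_1\times A_1$ is immediate, and all remaining points --- finiteness and closure of the orbit, and the presence of $-\alpha$ alongside each $\alpha$ --- follow at once from finiteness of the Coxeter group together with Definition~\ref{DefRootSys}.
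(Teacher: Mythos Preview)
Your proposal is correct and in fact considerably more complete than what the paper itself provides. The paper does not give a proof of this theorem at all: it merely illustrates the statement with the single explicit case $A_1\times A_1\times A_1$ (the simple roots $e_1,e_2,e_3$ reflect to $\pm e_i$, the vertices of an octahedron) and then remarks that the remaining calculations are straightforward in the Clifford framework, referring the reader to \cite{Dechant2012CoxGA}, \cite{Dechant2012AGACSE} for details.

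Your route is therefore genuinely different in that it supplies a uniform argument rather than case-by-case computation: you identify the Geometric Algebra reflection with the standard $s_\alpha$, invoke the general fact $\Phi=W\cdot\Delta$ for finite root systems, and then recognise each orbit as the named polyhedron via $W$-transitivity on length classes together with the standard vertex descriptions. This buys you a conceptual proof that avoids tracking individual reflected vectors, at the cost of importing the orbit fact $\Phi=W\cdot\Delta$ (which is standard but not stated in the paper). The paper's implicit approach---direct enumeration from the simple roots in Table~\ref{tab_Summ}---is more elementary but correspondingly more laborious, and is in any case only carried out for the trivial $A_1^3$ example.
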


For instance, the simple roots for $A_1\times A_1\times A_1$ are $\alpha_1=e_1$, $\alpha_2=e_2$ and $\alpha_3=e_3$ for orthonormal basis vectors $e_i$. Reflections amongst those then also generate $-e_1$, $-e_2$ and $-e_3$, which all together point to the vertices of an octahedron.

\begin{thm}[Spinors from reflections]\label{HGA_rotors}
The 6 (resp. 12/18/30)  reflections in the  Coxeter group  $A_1\times A_1 \times A_1$ (resp. $A_3$/$B_3$/$H_3$)   generate 8 (resp. 24/48/120) different rotors via Proposition \ref{HGA_rot}.
\end{thm}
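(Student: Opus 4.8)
The plan is to recognise the collection of rotors produced from the reflection root vectors as a finite subgroup of the rotor group of $\Cl(3)$, and to compute its order from the order of the associated Coxeter group. Fix unit simple roots for the group in question (call it $G_0\in\{A_1\times A_1\times A_1,\,A_3,\,B_3,\,H_3\}$) and recall that its full root system $\Phi$ has $6$, $12$, $18$, $30$ roots respectively. By Theorem~\ref{HGA_refl} and the Cartan--Dieudonn\'e theorem, an even product of reflections $s_{\beta_1}\cdots s_{\beta_{2k}}$ in roots $\beta_i\in\Phi$ corresponds, by iterating Proposition~\ref{HGA_rot}, to the object $R=\beta_1\cdots\beta_{2k}$; since every root is a unit vector, $R\tilde R=1$, so $R$ is a genuine rotor. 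Let $G$ denote the set of all rotors obtained in this way. By Theorem~\ref{HGA_thm_Rotor} rotors form a group under the geometric product, and $G$ is evidently closed under multiplication and reversal, hence is a finite subgroup of the rotor group; the whole statement reduces to computing $|G|$.

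The key step is to push $G$ down through the two-to-one covering $\rho:\Spin(3)\to SO(3)$, $R\mapsto\bigl(v\mapsto Rv\tilde R\bigr)$, whose kernel is $\{\pm 1\}$. A one-line computation gives $\rho(\alpha\beta)(v)=\alpha\beta\,v\,\beta\alpha=s_\alpha s_\beta(v)$, so $\rho$ sends the generators $\alpha\beta$ of $G$ onto the products of pairs of reflections $s_\alpha s_\beta$ with $\alpha,\beta\in\Phi$. Because the reflections $\{s_\alpha:\alpha\in\Phi\}$ generate the Coxeter group $W=W(G_0)$ (they include the simple reflections), and an even product of them has determinant $+1$ while $W$ itself contains orientation-reversing reflections, the even products generate exactly the rotation subgroup $W^+=W\cap SO(3)$, of index $2$ in $W$. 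Hence $\rho(G)=W^+$ and $|G|=|W^+|\cdot|\ker(\rho|_G)|$.

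It remains to check $-1\in G$, so that $\ker(\rho|_G)=\{\pm1\}$ has order $2$. Each of the four groups contains a pair of mutually orthogonal roots: all three simple roots of $A_1\times A_1\times A_1$ are orthogonal, and for $A_3$, $B_3$, $H_3$, whose Coxeter--Dynkin diagrams are paths on three nodes, the two end simple roots are unlinked and hence orthogonal. For orthogonal unit roots $\alpha\perp\beta$ one has $\alpha\beta=-\beta\alpha$, so $R=\alpha\beta$ satisfies $R^2=\alpha\beta\alpha\beta=-\alpha^2\beta^2=-1$, giving $-1\in G$. Therefore $|G|=2|W^+|=|W|$, and substituting $|W(A_1\times A_1\times A_1)|=2^3=8$, $|W(A_3)|=|S_4|=24$, $|W(B_3)|=2^3\cdot 3!=48$ and $|W(H_3)|=120$ yields precisely the claimed counts $8$, $24$, $48$, $120$.

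The main obstacle is the (otherwise easily overlooked) point that the double cover restricted to $G$ is non-trivial, i.e.\ $-1\in G$; the orthogonal-roots argument above settles it uniformly. As a cross-check one could instead identify $G$ by direct computation with the quaternion group $Q_8$ and the binary tetrahedral, octahedral and icosahedral groups respectively (via the quaternion isomorphism of Theorem~\ref{HGA_quatBV}) and read off $|G|$ by hand, but the argument via $|W|$ is shorter and makes transparent \emph{why} the rotor count equals the order of the Coxeter group.
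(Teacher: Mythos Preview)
Your argument is correct. The paper itself does not really prove this theorem: it exhibits the eight spinors in the $A_1\times A_1\times A_1$ case by hand and otherwise declares that ``the calculations are straightforward'' and refers to \cite{Dechant2012CoxGA}, \cite{Dechant2012AGACSE} for details; i.e.\ the intended justification is direct enumeration of the rotors in each case. Your route is genuinely different and more conceptual: you identify the set of even root-products as a subgroup $G\subset\Spin(3)$, push it through the double cover $\rho$ onto the rotation subgroup $W^+\subset W$, and verify $-1\in G$ using two orthogonal simple roots (the end nodes of the path diagrams for $A_3$, $B_3$, $H_3$, and any pair for $A_1^3$), whence $|G|=2|W^+|=|W|$. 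This not only gives the four numbers uniformly but explains \emph{why} the rotor count equals the order of the underlying Coxeter group, something the paper's case-by-case computation leaves implicit. The paper's approach has the compensating advantage that the explicit listing of rotors immediately identifies them with $Q$, $2T$, $2O$, $2I$, which is the next theorem; your final paragraph correctly anticipates this.
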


For the $A_1\times A_1\times A_1$ example above, the spinors thus generated are $\pm 1$, $\pm e_1e_2$, $\pm e_2 e_3$ and $\pm e_3e_1$.

\begin{thm}[4D polytopes]\label{HGA_4Dvertices}
The set of 8 (resp. 24/48/120) rotors when reinterpreted as a 4D polytope generate the 16-cell (24-cell/24-cell with dual/600-cell) 
\end{thm}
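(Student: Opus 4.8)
The plan is to pin down the induced point sets by explicit coordinates and then recognise them as the named polytopes. The key structural observation is that, by Theorem~\ref{HGA_thm_Rotor}, the rotors built in Theorem~\ref{HGA_rotors} form a subgroup of $\Spin(3)$ double-covering the rotational subgroup of the respective Coxeter group, and by Theorem~\ref{HGA_quatBV} this sits inside the unit quaternions; hence the four rotor groups are precisely the quaternion group $Q_8$ (order $8$, double cover of the $(\pi)$-rotations of $A_1\times A_1\times A_1$), the binary tetrahedral group $2T$ (order $24$), the binary octahedral group $2O$ (order $48$) and the binary icosahedral group $2I$ (order $120$), covering the tetrahedral, octahedral and icosahedral rotation groups respectively. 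Under the identification of Corollary~\ref{HGA_spinvec}, a rotor $R=a_0+a_1Ie_1+a_2Ie_2+a_3Ie_3$ becomes the $4$D vector $(a_0,a_1,a_2,a_3)$, of unit norm by Proposition~\ref{HGA_O4}; what remains is to write these $8/24/48/120$ unit vectors down and compare with the standard vertex coordinates of the $16$-cell, the $24$-cell, the $F_4$ root system and the $600$-cell.

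First I would treat $A_1\times A_1\times A_1$ explicitly, since the rotors were already computed to be $\pm1,\pm e_1e_2,\pm e_2e_3,\pm e_3e_1$. Using $e_1e_2=Ie_3$, $e_2e_3=Ie_1$, $e_3e_1=Ie_2$, these map to $\pm(1,0,0,0)$, $\pm(0,1,0,0)$, $\pm(0,0,1,0)$, $\pm(0,0,0,1)$, i.e.\ the eight vertices $\pm\hat e_i$ of the cross-polytope --- the $16$-cell, which is simultaneously the $A_1^{\oplus 4}$ root system, consistent with Theorem~\ref{HGA_4Drootsys}. For $A_3$, $B_3$ and $H_3$ I would take the simple roots listed in Table~\ref{tab_Summ}, generate the $12/18/30$ roots by the reflections of Theorem~\ref{HGA_refl}, form all geometric products $\alpha_i\alpha_j$ of pairs of roots, and collect the $24/48/120$ distinct unit rotors promised by Theorem~\ref{HGA_rotors}. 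Reading off their components gives, in the $A_3$ case, the $24$ vectors $\pm\hat e_i$ together with $\tfrac12(\pm1,\pm1,\pm1,\pm1)$ --- the Hurwitz units that are the vertices of the $24$-cell (the normalised $D_4$ root system); in the $B_3$ case one obtains these together with the further $24$ vectors $\tfrac{1}{\sqrt2}(\pm1,\pm1,0,0)$ and its permutations, so that the $48$ points are a $24$-cell together with its dual $24$-cell, which up to normalisation is exactly the $F_4$ root system; and in the $H_3$ case one recovers $\pm\hat e_i$, $\tfrac12(\pm1,\pm1,\pm1,\pm1)$ and the $96$ even permutations of $\tfrac12(0,\pm1,\pm\tau^{-1},\pm\tau)$ with $\tau=\tfrac12(1+\sqrt5)$, i.e.\ the $120$ vertices of the $600$-cell.

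To close the argument one has to certify that each of these explicit $4$D point sets really is the regular polytope claimed, and not merely a set of the right cardinality. I would do this by matching the coordinates above against the standard descriptions of the binary polyhedral groups and of the $4$D regular polytopes (Coxeter's tables; the icosian / Hurwitz-quaternion descriptions): all vertices lie on a common sphere (automatic, since rotors are unit spinors), the convex hull has the stated cell structure, and --- for the three cases that are also root systems --- Theorem~\ref{HGA_4Drootsys} together with $|\Phi|=|G|$ identifies the rank-$4$ root system uniquely by its size (and, for $F_4$, by the two root lengths appearing among the normalised vectors). The main obstacle is precisely this recognition step: the Clifford construction delivers a set of unit vectors with a guaranteed closure/root-system property, but confirming it is the \emph{specific} named polytope still requires either the explicit coordinate comparison sketched above or an independent check that the induced symmetry group has the expected order; everything else --- the reflection bookkeeping, the multiplication of root pairs, and the passage to components --- is routine once the simple roots are fixed.
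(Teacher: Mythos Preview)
Your proposal is correct and follows essentially the same approach as the paper: the paper's proof consists only of the explicit $A_1^3\to$ 16-cell example (identical to yours) together with the remark that the remaining cases are straightforward computations in the Clifford framework, with details deferred to \cite{Dechant2012CoxGA}, \cite{Dechant2012AGACSE}. Your version simply spells out more of what those computations look like and is honest about the recognition step being the only nontrivial part.
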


 For the rotors from $A_1\times A_1\times A_1$  one gets the vertices of the 16-cell ($(\pm 1, 0, 0,0)$ and permutations) via the correspondence in Corollary \ref{HGA_spinvec}. 

This is enough for the construction of the counterparts of the Platonic solids in 4D. However, the stronger statement on root systems implies also the following. 

\begin{thm}[4D root systems]\label{HGA_4Dtrin}
The Coxeter group  $A_1\times A_1 \times A_1$ (resp. $A_3$/$B_3$/$H_3$)   generates the root system for $A_1\times A_1 \times A_1\times A_1$ (resp. $D_4$/$F_4$/$H_4$).
\end{thm}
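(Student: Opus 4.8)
The plan is to leverage two facts already in hand: Theorem \ref{HGA_4Drootsys} guarantees that the induced set $\Phi$ of $4$D vectors \emph{is} a root system, with $|\Phi|=|G|\in\{8,24,48,120\}$ by Theorem \ref{HGA_rotors}; and Theorem \ref{HGA_4Dvertices} identifies this point set with the $16$-cell, the $24$-cell, the $24$-cell together with its dual, and the $600$-cell respectively. What remains is therefore only to recognise these four point sets as the root systems $A_1\times A_1\times A_1\times A_1$, $D_4$, $F_4$, $H_4$, after which the statement about Coxeter groups follows, since a root system determines its Coxeter group.

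First I would fix the simple roots in $\Cl(3)$ for each of $A_1\times A_1\times A_1$, $A_3$, $B_3$, $H_3$ (the choices recorded in Table \ref{tab_Summ}), generate the full $3$D root system by the reflections $a\mapsto -nan$ of Theorem \ref{HGA_refl}, and form all pairwise geometric products $\alpha_i\alpha_j$ to obtain the rotor group $G$ via Proposition \ref{HGA_rot}. Explicitly this produces $G=\{\pm1,\pm Ie_1,\pm Ie_2,\pm Ie_3\}$ in the first case and the binary tetrahedral, binary octahedral and binary icosahedral groups (orders $24$, $48$, $120$, the last involving the golden ratio $\tau=\tfrac12(1+\sqrt5)$) in the other three, realised as unit spinors of $\Cl(3)$; since $G$ is closed under the geometric product it suffices to exhibit a generating set of rotors lying in the claimed group and to count orders.

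Next I would apply Corollary \ref{HGA_spinvec}: a spinor $R=a_0+a_1Ie_1+a_2Ie_2+a_3Ie_3$ maps to $(a_0,a_1,a_2,a_3)\in\mathbb{R}^4$, equipped by Proposition \ref{HGA_O4} with the standard Euclidean inner product, so these coordinates can be compared directly with the textbook descriptions $A_1^4=\{(\pm1,0,0,0)\text{ and permutations}\}$, $D_4=\{(\pm1,\pm1,0,0)\text{ and permutations}\}$, $F_4=D_4\cup\{(\pm1,0,0,0)\text{ perms}\}\cup\{\tfrac12(\pm1,\pm1,\pm1,\pm1)\}$, and $H_4$ the $120$ vertices of the $600$-cell, $\{(\pm1,0,0,0)\text{ perms}\}\cup\{\tfrac12(\pm1,\pm1,\pm1,\pm1)\}\cup\{\tfrac12(0,\pm1,\pm\tau^{-1},\pm\tau)\text{ and even perms}\}$. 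Matching the lists entry by entry completes the identification; in particular the coordinates span $\mathbb{R}^4$, so the induced Coxeter group genuinely has rank $4$ rather than being an embedded lower-rank group.

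The main obstacle is the case $H_3\to H_4$: the thirty icosidodecahedral roots carry golden-ratio coordinates, and checking by hand that the $120$ rotors they generate are exactly the $600$-cell vertices means tracking $\tau$-arithmetic through many products. The clean way to avoid the brute-force enumeration is to note that $G$ is a finite subgroup of $\Spin(3)$ of order $120$, hence conjugate to the binary icosahedral group, and then to invoke the classical fact that the binary icosahedral group, viewed via Theorem \ref{HGA_quatBV} as a set of unit quaternions, is precisely the vertex set of the $600$-cell and hence the $H_4$ root system; the remaining three cases are the analogous, and easier, statements for the binary tetrahedral group, the binary octahedral group and the quaternion group, or can be read straight off the explicit spinor lists above.
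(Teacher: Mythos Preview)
Your proposal is correct and follows essentially the same route as the paper: invoke Theorems \ref{HGA_4Drootsys}, \ref{HGA_rotors} and \ref{HGA_4Dvertices} to obtain a 4D root system of the right cardinality and polytope type, then identify it with the named Coxeter root system. The paper itself does not spell out a proof beyond the worked $A_1^3$ example and the remark that ``the calculations are straightforward once the Clifford algebra framework with the geometric product is adopted,'' deferring details to the cited references; your write-up is therefore more explicit than what appears here.

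One small caution on the ``entry by entry'' matching: for $A_3$ the spinor group realised in coordinates is the set of Hurwitz units $\{(\pm1,0,0,0)\text{ perms}\}\cup\{\tfrac12(\pm1,\pm1,\pm1,\pm1)\}$, not literally $\{(\pm1,\pm1,0,0)\text{ perms}\}$. Both are $24$-cells and both are $D_4$ root systems, but they differ by an orthogonal change of basis, so the comparison with the ``textbook'' $D_4$ coordinates is up to isometry rather than pointwise. Your alternative argument via the classification of finite subgroups of $\Spin(3)$ sidesteps this cleanly and is in fact the more robust formulation, especially for the $H_3\to H_4$ case where, as you note, the golden-ratio bookkeeping is unpleasant.
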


In fact, these groups of discrete spinors yield a novel construction of the binary polyhedral groups.

\begin{thm}[Spinor groups and binary polyhedral groups]
The discrete spinor group in Theorem \ref{HGA_rotors} is isomorphic to the quaternion group $Q$ (resp. binary tetrahedral group $2T$/binary octahedral group $2O$/binary icosahedral group $2I$).
\end{thm}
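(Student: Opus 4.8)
The plan is to read the statement off the two-to-one covering homomorphism $\rho\colon\Spin(3)\to SO(3)$ that underlies Theorem~\ref{HGA_thm_Rotor}: a rotor $R$ acts on vectors by the double-sided map $a\mapsto Ra\tilde R$, which is precisely a rotation in $SO(3)$, with $R$ and $-R$ inducing the same rotation, so that $\ker\rho=\{\pm1\}$. Write $G$ for the discrete spinor group of Theorem~\ref{HGA_rotors}. First I would observe that $\rho(G)$ lies in the rotational subgroup $W^{+}$ of the corresponding Coxeter group $W$ (its subgroup of even elements): each generating rotor is a product $mn$ of two root vectors, and $\rho(mn)=s_{m}s_{n}$ is a product of two reflections, hence an even element of $W$.

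Second, I would pin $G$ down by counting. Since $\ker\rho=\{\pm1\}$, the restriction $\rho|_{G}$ is two-to-one precisely when $-1\in G$; and $-1\in G$ in every case, because for any two non-orthogonal roots $m,n$ at mutual angle $\pi/p$ one has $mn=\exp(-\widehat B\,\pi/p)$ for a unit bivector $\widehat B$ with $\widehat B^{2}=-1$, whence $(mn)^{p}=-1$ (in the all-orthogonal case $A_1\times A_1\times A_1$ one simply has $(e_1e_2)^2=-1$). Combined with the orders $|G|=8,24,48,120$ recorded in Theorem~\ref{HGA_rotors}, this gives $|\rho(G)|=\tfrac12|G|=|W^{+}|$, hence $\rho(G)=W^{+}$, and, by the order count again, $G=\rho^{-1}(W^{+})$. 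Thus $G$ is exactly the binary cover of $W^{+}$, and it only remains to name $W^{+}$: for $A_1\times A_1\times A_1$, $W^{+}$ is the Klein four-group and its preimage $\{\pm1,\pm e_2e_3,\pm e_3e_1,\pm e_1e_2\}$ is, via Theorem~\ref{HGA_quatBV}, the quaternion group $Q$; for $A_3\cong S_4$, $W^{+}\cong A_4$ is the tetrahedral rotation group, giving $2T$; for $B_3$, $W^{+}\cong S_4$ is the octahedral rotation group, giving $2O$; for $H_3$, $W^{+}\cong A_5$ is the icosahedral rotation group, giving $2I$.

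A more pedestrian alternative would be to list the spinors explicitly -- e.g.\ for $2T$ the $24$ elements $\pm1$, $\pm e_1e_2$, $\pm e_2e_3$, $\pm e_3e_1$ and $\tfrac12(\pm1\pm e_1e_2\pm e_2e_3\pm e_3e_1)$, and for $2I$ the corresponding $120$ spinors built with the golden ratio -- translate them into unit quaternions via Theorem~\ref{HGA_quatBV}, and check that they satisfy a standard presentation of $Q$, $2T$, $2O$, $2I$. This is routine but unenlightening, so I would confine it to a remark or a table.

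I do not expect a deep obstacle here: once Theorem~\ref{HGA_rotors} is granted, the isomorphism type of $G$ is essentially forced. The points needing care are (i) checking $-1\in G$ uniformly, handled above, and (ii) identifying $\rho^{-1}(W^{+})$ with the named binary polyhedral group, which amounts to matching the definition of $2T$, $2O$, $2I$ as preimages in $\Spin(3)$ of the polyhedral rotation groups (equivalently, as the groups with the standard presentations) against the short exact sequence $1\to\{\pm1\}\to G\to W^{+}\to1$ just produced. The only genuine subtlety is ensuring that the spinor \emph{group} is the full preimage rather than a proper subgroup -- which is exactly what the order count from Theorem~\ref{HGA_rotors} secures.
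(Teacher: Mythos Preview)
Your proposal is correct and considerably more substantial than what the paper itself provides. The paper offers no proof in the body of the text: immediately after the theorem it simply remarks that ``the calculations are straightforward once the Clifford algebra framework with the geometric product is adopted'' and refers the reader to \cite{Dechant2012CoxGA} and \cite{Dechant2012AGACSE} for details. Your argument via the double cover $\rho\colon\Spin(3)\to SO(3)$, the uniform verification that $-1\in G$, and the order count from Theorem~\ref{HGA_rotors} is the natural conceptual route and explains \emph{why} the spinor group must be the full preimage $\rho^{-1}(W^{+})$ rather than merely exhibiting elements. The ``pedestrian alternative'' you sketch---listing the spinors explicitly and matching them against standard quaternionic presentations via Theorem~\ref{HGA_quatBV}---is closer in spirit to what the cited references actually do, so your main argument is a genuine upgrade over what appears here.
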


The calculations are straightforward once the Clifford algebra framework with the geometric product is adopted, and more details can be found in \cite{Dechant2012CoxGA},  \cite{Dechant2012AGACSE}.

The Platonic solids thus in the above sense induce their counterparts in four dimensions, the convex regular polychora. There are six such polytopes, and the 16-cell, 24-cell and 600-cell are directly induced as shown above and displayed in Table \ref{tab_spin_Summ}. Using duality, the 8-cell is induced from the 16-cell and the 120-cell is the  dual of the 600-cell (the 24-cell is self-dual). The only remaining case is the 5-cell. This is the 4-simplex belonging to the family of $n$-dimensional simplices with symmetry group $A_n$. This is the only such 4D polytope that is not equal or dual to a root system. In fact it can obviously not be a root system, nor in particular be constructed via our approach, as it has an odd number of vertices, 5. This is therefore (ironically) the only exception to our connections among the Platonic solids and their four-dimensional counterparts. The only regular polytopes in higher dimensions are the $n$-dimensional simplex ($A_n$), cube ($B_n$) and crosspolytope ($B_n$). Thus, in particular the existence of the exceptional 4D phenomena of 24-cell ($D_4$  and $F_4$), 600-cell and 120-cell ($H_4$) are explained by the `accidentalness' of the spinor construction. This is particularly interesting for triality ($D_4$), $F_4$ as the largest crystallographic group in 4D, and quasicrystals, since $H_4$ is the largest non-crystallographic Coxeter group.

\subsection{Arnol'd and Mathematical Trinities}\label{ACA_Trin}

The great mathematician Vladimir Arnol'd had an exceedingly broad view of mathematics, and his metapattern-inspired proofs and conjectures have started and/or shaped many subject areas  \cite{Arnold2000AMS}. For instance, linear algebra is essentially the theory of the root systems $A_n$. However by abstracting away towards a description in terms  of root systems, many results carry over to other root systems and thereby to other geometries (e.g. Euclidean and symplectic for $BC_n$, $D_n$). This is an alternative to the conventional view of seeing these as special cases of linear algebra with extra structure. 

The most recent and  important such metapattern appear to be his trinities \cite{Arnold2000AMS} \cite{Arnold1999symplectization}, born out of the observation that many areas of real mathematics can be complexified and quaternionified resulting in theories with a similar structure. The fundamental trinity is thus $(\mathbb{R}, \mathbb{C}, \mathbb{H})$, and other trinities include $(\mathbb{R}P^n, \mathbb{C}P^n, \mathbb{H}P^n)$, $(\mathbb{R}P^1=S^1, \mathbb{C}P^2=S^2, \mathbb{H}P^1=S^4)$, the M\"obius/Hopf bundles $(S^1\rightarrow S^1, S^4\rightarrow S^2, S^7 \rightarrow S^4)$,  $(E_6, E_7, E_8)$ and many more. 

There are in fact trinities related to the above Platonic considerations such as (Tetrahedron, Octahedron, Icosahedron), $(A_3, B_3, H_3)$, $(24, 48, 120)$, and $(D_4, F_4, H_4)$ but they were very loosely connected to each other in previous work. For instance, Arnol'd's connection between $(A_3, B_3, H_3)$ and $(D_4, F_4, H_4)$ is very convoluted and involves numerous other trinities at intermediate steps via a decomposition of the projective plane into Weyl chambers and Springer cones, and noticing that the number of Weyl chambers in each segment $24=2(1+3+3+5), 48=2(1+5+7+11), 120=2(1+11+19+29))$ miraculously matches the quasihomogeneous weights $((2, 4, 4, 6), (2, 6, 8, 12), (2, 12, 20, 30))$ of the Coxeter groups $(D_4, F_4, H_4)$ \cite{Arnold1999symplectization}. 

We therefore believe that the construction here is considerably easier and more immediate than Arnol'd's original connection between several of the trinities, such as   $(A_3, B_3, H_3)$,   $(D_4, F_4, H_4)$, (Tetrahedron, Octahedron, Icosahedron), and $(24, 48, 120)$. In fact we are not aware that the following are considered trinities and would suggest to add them: the root systems of $(A_3, B_3, H_3)$ (cuboctahedron, cuboctahedron with octahedron, icosidodecahedron), the number of roots in these root systems $(12, 18, 30)$ and the binary polyhedral groups $(2T, 2O, 2I)$. 

Our framework also finds alternative interpretations of well-known trinities, such as $(24, 48, 120)$ as the number of 3D spinors or 4D root vectors as opposed to the Weyl number decomposition. We will revisit these connections and interpretations in more detail later in the context of the McKay correspondence, as one can wonder if this picture in terms of trinities is in fact the most useful description. For instance, the Clifford spinor construction also worked for $A_1\times A_1 \times A_1$ giving the 4D `Platonic solid' 16-cell, and we shall see in the next section that the construction also holds for the other 3-dimensional root systems, arguably making it more general than a trinity. 


Going back to the beginning of this section, the spinorial nature of the root systems $(D_4, F_4, H_4)$ could also have interesting consequences from the perspective of abstracting away from linear algebra to $A_n$ and generalising to other root systems and geometries.

\section{The general picture: 3D root systems, spinor induction  and symmetries}\label{ACA_general}

The Clifford spinor construction holds for any rank-3 root system, and not just those related to the Platonic solids as considered above. In this section we therefore examine the remaining cases. In fact, the root systems $A_3$, $B_3$ and $H_3$ are the only irreducible root systems in 3 dimensions. $A_1$ is the unique one-dimensional root system, and having already considered  $A_1\times A_1\times A_1$, the only missing cases are the sum of $A_1$ with a two-dimensional irreducible root system. These are the root systems of the symmetry groups $I_2(n)$ of the regular $n$-gons, which are easily dealt with in a uniform way.

\subsection{A Doubling Procedure}\label{ACA_Double}

Without loss of generality,  the simple roots for $I_2(n)$  can  be taken as
 $\alpha_1=e_1$ and $\alpha_2=-\cos{\frac{\pi}{n}}e_1+\sin{\frac{\pi}{n}}e_2$. We have shown in \cite{Dechant2012Induction} that an analogue of the spinor construction exists in 2D, but is of limited interest, as the 2D root systems are shown to be self-dual: 

The space of spinors $R=a_1+a_2e_1e_2=: a_1+a_2I$ in two-dimensional Euclidean space (defining $I:=e_1e_2$) is also two-dimensional, and has a natural Euclidean structure given by $R\tilde{R}=a_1^2+a_2^2$. A 2D root vector $\alpha_i=a_1e_1+a_2e_2$ is therefore in bijection with a spinor by $\alpha_i\rightarrow \alpha_1\alpha_i=e_1\alpha_i =a_1+a_2e_1e_2= a_1+a_2I$ (taking $\alpha_1=e_1$ without loss of generality). This is the same as forming a spinor between those two root vectors.  The infinite family of two-dimensional root systems $I_2(n)$ is therefore self-dual. 

Taking $\alpha_1$ and $\alpha_2$ as generating $I_2(n)$ and $\alpha_3=e_3$ for $A_1$, one has a total of $2n+2$ roots. One easily computes that these generate  a spinor group of order $4n$  which consists of two sets of order $2n$ that are mutually orthogonal under the spinor norm in Proposition \ref{HGA_O4}. One therefore finds the following theorem. 

\begin{thm}[4D root systems from $A_1\oplus I_2(n)$]\label{HGA_4DI2n}
Under the Clifford spinor construction the 3D root systems  $A_1\oplus I_2(n)$    generate the root systems $ I_2(n)\oplus I_2(n)$ in 4D.
\end{thm}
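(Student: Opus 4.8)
The plan is to prove the statement by direct computation, exploiting the fact that for $A_1\oplus I_2(n)$ all the relevant vectors lie in the plane $\langle e_1,e_2\rangle$ together with the line $\langle e_3\rangle$. First I would fix the roots: the $2n$ roots of the $I_2(n)$ factor form the regular $2n$-gon of unit vectors $v_k=\cos\frac{k\pi}{n}e_1+\sin\frac{k\pi}{n}e_2$ for $k=0,\dots,2n-1$, which span $\langle e_1,e_2\rangle$, while the $A_1$ factor contributes $\pm e_3$, orthogonal to that plane. These $2n+2$ vectors are the reflection root vectors to be fed into the spinor construction.

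Next I would compute the rotor group $G$ generated via Proposition~\ref{HGA_rot}. A product of two $I_2(n)$ roots satisfies $v_jv_k=\cos\frac{(k-j)\pi}{n}+\sin\frac{(k-j)\pi}{n}\,e_1e_2$, so the $I_2(n)$ roots alone generate the cyclic rotor group $C_{2n}$ sitting inside the even-grade plane $\langle 1,e_1e_2\rangle$. Since $e_3$ anticommutes with every $v_k$, the product $e_3e_3=1$ is already present, whereas $e_3v_k=\cos\frac{k\pi}{n}\,e_3e_1+\sin\frac{k\pi}{n}\,e_3e_2$ gives $2n$ distinct unit bivectors forming a regular $2n$-gon in the bivector plane $\langle e_3e_1,e_3e_2\rangle$. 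A short geometric-product check (using e.g. $(e_3e_1)^2=(e_3e_2)^2=-1$, $(e_3e_1)(e_3e_2)=-e_1e_2$, and $e_3v_je_3v_k=-v_jv_k$) shows that $G=C_{2n}\sqcup\{e_3v_k\}$ is closed under multiplication, hence $|G|=4n$, in agreement with $|\Phi|=|G|$ from Theorem~\ref{HGA_4Drootsys}.

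Finally I would pass to $4$D. By Proposition~\ref{HGA_O4} the identification of Corollary~\ref{HGA_spinvec} is an isometry onto $\mathbb{R}^4$; since $e_1e_2=Ie_3$ and $\langle e_3e_1,e_3e_2\rangle=\langle Ie_1,Ie_2\rangle$, the subgroup $C_{2n}$ maps onto a regular $2n$-gon of unit vectors in the coordinate plane $(a_0,a_3)$ and the coset $\{e_3v_k\}$ onto a regular $2n$-gon of unit vectors in the coordinate plane $(a_1,a_2)$. These two coordinate planes of $\mathbb{R}^4$ are orthogonal, so every vector of the first gon is perpendicular to every vector of the second; equivalently, one checks directly that $(R_1,R_2)=0$ for $R_1\in C_{2n}$ and $R_2\in\{e_3v_k\}$ using the inner product of Proposition~\ref{HGA_O4}. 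A regular $2n$-gon of unit vectors is a choice of roots for $I_2(n)$, so $\Phi$ is the orthogonal union of two such systems, i.e. $I_2(n)\oplus I_2(n)$. The root-system axioms of Definition~\ref{DefRootSys} need no re-verification, being automatic from Theorem~\ref{HGA_4Drootsys}; the only content is the identification of the type, which the explicit angles ($\pi/n$ within each gon, $\pi/2$ between them) make transparent.

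The step I expect to be the main obstacle is the rotor-group bookkeeping in the middle paragraph. One must check that the $I_2(n)$ rotors exhaust all of $C_{2n}$ rather than a proper subgroup --- otherwise the $4$D point set would contain repeated or non-primitive directions and fail axiom~1 of Definition~\ref{DefRootSys} --- that the coset $\{e_3v_k\}$ is genuinely disjoint from $C_{2n}$ and closed under multiplication together with it, and that the two pieces are mutually orthogonal, the latter being exactly where the hypothesis $e_3\perp\langle e_1,e_2\rangle$ of the $A_1$ summand is used. A minor additional point is that for even $n$ the $2n$-gon realisation of $I_2(n)$ is the equal-length, non-crystallographic one, which is nonetheless a root system in the sense of Definition~\ref{DefRootSys}.
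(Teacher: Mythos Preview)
Your proposal is correct and follows essentially the same route as the paper: split the rotor group into the $2n$ spinors coming from pairs of $I_2(n)$ roots (living in $\langle 1,e_1e_2\rangle$) and the $2n$ spinors coming from mixed $A_1$--$I_2(n)$ pairs (living in $\langle e_3e_1,e_3e_2\rangle$), observe that the two sets are related by left-multiplication by $e_1e_3$ and hence are orthogonal copies of the same $2n$-gon, and identify each with an $I_2(n)$ root system in the induced $4$D Euclidean space. Your treatment is more explicit computationally---you write out the products $v_jv_k$ and $e_3v_k$, verify closure via the dicyclic relations, and name the coordinate planes $(a_0,a_3)$ and $(a_1,a_2)$---whereas the paper argues more tersely via the self-duality of $I_2(n)$ and the bijection $\alpha_i\mapsto e_1\alpha_i$, but the underlying decomposition and the orthogonality argument are the same.
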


The case of $A_1\times A_1\times A_1$ inducing $A_1\times A_1\times A_1\times A_1$ is now seen to be a special case of this more general `doubling construction'. In fact one can easily see that one of the $I_2(n)$ sets is $e_1e_3$-times that of the other. In terms of quaternions (Theorem \ref{HGA_quatBV}), this corresponds to an imaginary unit $j$ or $k$ and is often the starting Ansatz in the literature \cite{Koca2009grand}. This is in fact the only way in 4D the root systems can be orthogonal, but we just point out here that it arises naturally from our induction construction. 

To see why the order of the spinor group is $4n$ and the construction yields two copies with the above properties, let us consider the  products of two root vectors. If both root vectors in the product $\alpha_i\alpha_j$ are from $A_1$, one merely gets $\pm 1$, which is trivially in the spinor group. Without loss of generality one can therefore say that either one or both root vectors are from $I_2(n)$ (there are $2n$ root vectors). If both are from $I_2(n)$, then from the self-duality of $I_2(n)$ one has that $2n$ such spinors $R=\alpha_i\alpha_j$ arise. It is easy to see that none of these can contain  $e_3$.
The other possibility is to have one root $\alpha_i$ from $I_2(n)$ and $\alpha_3$ from $A_1$. There are $2n$ of the former and because they contain the negative roots $-\alpha_i$, only $2n$ different spinors arise when multiplying with $\pm \alpha_3$. 
These therefore together account for the order of $4n$. Since the first case of spinor is in bijection with a root vector via multiplying with $e_1$, one can continue and map to the second case by multiplying with $e_3$. One can therefore map directly from one kind of spinor to the other by multiplying with $e_1e_3\sim j$. The two are therefore necessarily orthogonal but otherwise identical.

\subsection{Spinorial Symmetries}\label{ACA_Sym}

The Clifford algebraic approach via spinor groups has the decided advantage that it is clear firstly why the root system is given by a binary polyhedral group, and secondly, why this group reappears in the automorphism group. There are three common group actions (whereby the group acts on itself): left action (action by group multiplication from the left) $gh$, right action $hg$, and conjugation $g^{-1}hg$. By virtue of being a spinor group, the set of vertex vectors is firstly closed under reflections  and thus a root system because of the group closure property, and secondly invariant under both left and right multiplication separately.

\begin{thm}[Spinorial symmetries]\label{HGAsymmetry}
A root system induced via the Clifford spinor construction has an automorphism group that contains two factors of the respective spinor group acting from the left and the right.
\end{thm}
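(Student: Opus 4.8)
The plan is to exhibit, inside the orthogonal group of the four-dimensional space of $\Cl(3)$-spinors of Proposition \ref{HGA_O4}, two mutually commuting subgroups each isomorphic to the spinor group $G$, and to check that both stabilise the induced root system $\Phi$ (the image of $G$ under the spinor-to-vector map of Corollary \ref{HGA_spinvec}); since $\Aut(\Phi)$ is by definition the subgroup of $O(4)$ that maps $\Phi$ to itself, this yields the statement. The essential structural input, used repeatedly, is that by Theorem \ref{HGA_thm_Rotor} the elements of $G$ are rotors, hence \emph{unit} spinors: $L\tilde L=\tilde L L=1$ for every $L\in G$.

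First I would treat left multiplication $\lambda_L\colon R\mapsto LR$ by a fixed $L\in G$. Using $\widetilde{LR}=\tilde R\,\tilde L$ together with the fact that the symmetrised product $R_1\tilde R_2+R_2\tilde R_1=2(R_1,R_2)$ lies in grade $0$, hence is central in the algebra and may be moved past $L$ and $\tilde L$,
\begin{equation}
(LR_1,LR_2)=\tfrac12\bigl(LR_1\widetilde{LR_2}+LR_2\widetilde{LR_1}\bigr)=\tfrac12 L\bigl(R_1\tilde R_2+R_2\tilde R_1\bigr)\tilde L=(R_1,R_2)\,L\tilde L=(R_1,R_2),
\end{equation}
so $\lambda_L$ is an isometry of the spinor norm. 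Because $G$ is a group, $\lambda_L$ permutes $G$, i.e. permutes $\Phi$, hence $\lambda_L\in\Aut(\Phi)$. The map $L\mapsto\lambda_L$ is a group homomorphism $G\to\Aut(\Phi)$ by associativity, and it is injective since $\lambda_L(1)=L$. Thus $\{\lambda_L:L\in G\}$ is a copy of $G$ inside $\Aut(\Phi)$.

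For the right action one cosmetic adjustment is needed: set $\rho_M\colon R\mapsto R\tilde M$ for $M\in G$. The same computation, now with $\widetilde{R\tilde M}=M\tilde R$ and $M\tilde M=1$, shows $\rho_M$ is an isometry, and closure of $G$ shows it permutes $\Phi$; the reversal is inserted so that $\rho_M\rho_N(R)=R\tilde N\tilde M=R\,\widetilde{MN}=\rho_{MN}(R)$, making $M\mapsto\rho_M$ an honest homomorphism (again injective by evaluating at $R=1$). Finally, associativity gives $(LR)\tilde M=L(R\tilde M)$, so $\lambda_L\rho_M=\rho_M\lambda_L$ for all $L,M\in G$: the two copies of $G$ commute and together generate a subgroup of $\Aut(\Phi)$ which is a homomorphic image of $G\times G$ (the only identification being that $(-1,-1)$ acts trivially, as $-1$ acts the same way from either side, so the image is $(G\times G)/\{\pm(1,1)\}$). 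This establishes that $\Aut(\Phi)$ contains the two advertised factors of the spinor group.

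I do not expect a genuine obstacle here; the only points requiring care are (i) recognising that $R_1\tilde R_2+R_2\tilde R_1$ is a scalar and hence central, which is exactly what makes the isometry computation collapse via $L\tilde L=1$, and (ii) bookkeeping the reversal in the right action so that one obtains a homomorphism rather than an anti-homomorphism. A further subtlety, not needed for the stated containment but worth flagging, is that $\Aut(\Phi)$ is in general strictly larger than the group generated by these two copies of $G$: for example the reversal $R\mapsto\tilde R$ is also an isometry stabilising $\Phi$ (and the reflections of Lemma \ref{HGA_4Drefl} are built from left multiplication, right multiplication and reversal), and in the crystallographic cases further diagram automorphisms occur — so the theorem should be read as a lower bound on $\Aut(\Phi)$, consistent with the count $|\Phi|=|G|$ noted after Theorem \ref{HGA_4Drootsys}.
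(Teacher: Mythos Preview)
Your proposal is correct and follows the same idea as the paper: the paper's justification (the paragraph immediately preceding the theorem) simply observes that the spinor group acts on itself by left and right multiplication and that the vertex set is therefore invariant under both, without writing out a formal proof. You have supplied the details the paper omits---the isometry check via the scalar nature of $R_1\tilde R_2+R_2\tilde R_1$, the reversal in the right action to obtain a genuine homomorphism, commutativity, and the $(-1,-1)$ kernel---all of which are correct and consistent with the paper's subsequent order counts such as $|\Aut(H_4)|=120^2$.
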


In our opinion, the construction from 3D is the only compelling explanation for a number of features that we will explain for the example of $H_4$:
\begin{itemize}
	\item That the root system $H_4$ can be constructed in terms of quaternions (Theorems \ref{HGA_quatBV} and \ref{HGA_O4}).
	\item That reflections are given by quaternion multiplication (Lemma \ref{HGA_4Drefl}).
	\item That as a discrete quaternion group the root system is isomorphic to a discrete subgroup of $\Spin(3)\sim SU(2)$, the binary icosahedral group (Theorem \ref{HGA_thm_Rotor}).
	\item That the group $H_4$ can essentially be generated from 2 (rather than 4) simple quaternionic roots (they are essentially the spinors $\alpha_1\alpha_2$ and $\alpha_2\alpha_3$ in terms of the simple roots of $H_3$ \cite{Dechant2012CoxGA}).
	\item That the sub root system $H_3$ is given by the pure quaternions (this is essentially just Hodge duality with the pseudoscalar/inversion $I$, mapping root vectors to pure quaternions. For instance, this is not true for $A_3$, which does not contain $I$ \cite{Dechant2012CoxGA}).
	\item That the automorphism group of $H_4$ consists of two copies of  the binary icosahedral group $2I$ (Theorem \ref{HGAsymmetry}): $\Aut(H_4)=2I\times 2I$ and is of order $(120)^2$.
    \item That $H_4$ is an exceptional phenomenon (accidentalness of the construction).
\end{itemize}
Similarly, the automorphism group of $F_4$ is given by the product of two binary octahedral groups $\Aut(F_4)=2O\times 2O$ of order $(48)^2$. The automorphism group of $D_4$ contains two factors of the binary tetrahedral group $2T$ of order $(24)^2$, as well as an order 2 $\mathbb{Z}_2$-factor, which is essentially whether the basis vectors $e_1$, $e_2$, $e_3$ are cyclic or anticyclic (i.e. a Dynkin diagram symmetry of $A_3$). In particular, $D_4$ does not contain $A_3$ as a pure quaternion subgroup, since $A_3$ does not contain the inversion, and the central node in the $D_4$ diagram is essentially spinorial (i.e. not a pure bivector/quaternion). The automorphism groups of $I_2(n)\oplus I_2(n)$ are two factors of the dicyclic groups of order $(4n)^2$. The automophism group of $A_1^4$ contains two copies of the quaternion group $Q$ as well as a factor of $S_3$ for the permutations of the basis vectors $e_1$, $e_2$, $e_3$ (from the Dynkin diagram symmetries of $A_1^3$), giving order $3!8^2$. A summary of the symmetries in this and the next subsections is displayed in Table \ref{tab_Symm}. 

We therefore contend that the Clifford algebraic approach in terms of spinors is a new geometric picture which derives the known results (and more) uniformly and much more efficiently than the standard approach. In particular, spinor techniques extend to arbitrary dimensions -- the isomorphism with $\mathbb{H}$ is accidental in three dimensions and the quaternionic description does therefore not extend to higher dimensions. 

The spinorial nature of the respective 4D root systems thus demystifies the peculiar symmetries of the 4D Platonic solid analogues 16-cell, 24-cell (and dual) and 600-cell and their duals as essentially the rotational symmetries of the conventional 3D Platonic solids. In the next two subsections, we consider another group action, having dealt with left and right actions in this section, and we shall see that the spinorial symmetries also leave imprints on other 4D  (semi-regular) polytopes.

\subsection{Conjugal Spinor Groups}\label{ACA_Conj}

The spinor groups we have been considering so far formed groups where the multiplication law was given by the geometric product. However, as we have seen, these groups are also closed when one takes the operation $R_2\rightarrow R_2'=-R_1\tilde{R}_2R_1/(R_1\tilde{R}_1)$ from Lemma \ref{HGA_4Drefl} as the group multiplication. This of course simply amounts to closure under reflections in 4D and thus the root system property. 

If one considers the spinor groups derived from reflections in $A_3$/$B_3$/$H_3$ (i.e. essentially $2T$/$2O$/$2I$ which  ultimately gives rise to $D_4$/$F_4$/$H_4$) as given earlier in Theorem \ref{HGA_rotors}, but now instead takes as the group multiplication law the one given by Lemma \ref{HGA_4Drefl}, one finds  several subgroups, which of course correspond to the sub root systems that one would expect such as $A_1^n$, $A_3$, $B_3$, $H_3$, $A_2$, $H_2$, $A_2\times A_1$, $A_2\times A_1\times A_1$ etc and their closure property. However, one also finds in addition $A_2\times A_2$ or $H_2\times H_2$ in the case of $H_4$, or $B_4$ in $F_4$.

Since the double-sided multiplication law is remotely reminescent of group conjugation (with a twist), as opposed to left and right action which gave rise to the automorphism groups, we will call these subgroups `conjugal'. It is interesting that it is possible to recast the problem of finding a sub root system to the group theoretic problem of finding subgroups.

\subsection{The Grand antiprism and the snub 24-cell}\label{ACA_GrAp}

Since the $H_4$ root system 600-cell contains $H_2\oplus H_2$, it is obvious that $H_2\oplus H_2$ even when thought of as a subset of $H_4$ is invariant under its own Coxeter group, so that it lies on its own orbit. The 600-cell has  120 vertices given by the binary icosahedral group $2I$, and one finds that subtracting the 20 vertices of $H_2\oplus H_2$, the remaining 100 vertices are on another orbit of $H_2\oplus H_2$, and give a semi-regular polytope called the Grand antiprism. It was only constructed in 1965 by Conway and Guy by means of a computer calculation \cite{Conway1965Four}. In particular it is interesting that the symmetry group of the grand antiprism is by our construction  given by $\Aut(H_2\oplus H_2)$ \cite{Koca2009grand}, which as we have just seen is of order $400=20^2$. This route to the grand antiprism is considerably more economical than the traditional approach. It is interesting as a non-Platonic example of a spinorial symmetry and also from the doubling perspective: $H_3$ has a subgraph $H_2\oplus A_1$ (by ignoring the unlabelled link in $H_3$), so one might think of the  $H_2\oplus H_2$ inside $H_4$ as induced via the doubling procedure from the $H_2\oplus A_1$ inside the  $H_3$. Likewise, $H_4$ also has another (maximal) subgroup  $\Aut(A_2\oplus A_2)$ that can similarly be seen to arise from the $A_2 \oplus A_1$ inside the $H_3$ by deleting the other link (the one labelled by $5$) in the $H_3$ diagram, and has order $144=12^2$. These are intriguing imprints of spinorial geometry on the symmetries of the Grand antiprism. 

The snub 24-cell has similar symmetries. The binary tetrahedral group $2T$ is a subgroup of the binary icosahedral group $2I$. Therefore, subtracting the 24 vertices of the 24-cell from the 120 vertices of the 600-cell, one gets a semi-regular polytope with 96 vertices called the snub 24-cell. Since the 24 subtracted points from the $D_4$ root system form a single orbit  under $2T$, the remaining 96 points are likewise separately left invariant under $2T$. The symmetry group of both sets is therefore given by $2T\times 2T$, and the order is thus $576=24^2$, explaining the symmetry of the snub 24-cell in spinorial terms. 
 
These two cases of 4D polytopes are therefore examples of semiregular polytopes exhibiting spinorial symmetries, much like the `4D Platonic solids'.

\section{The 4D Menagerie}\label{ACA_Menagerie}

We have shown that some Coxeter groups of rank 4 are induced via the Clifford spinor construction, and we have seen that others are subgroups or conjugal subgroups of these. There is only a limited number of rank 4 root systems. Therefore in this section we consider all rank 4 root systems in the context of spinor induction. 

Table \ref{tab_rank4} summarises the results for the menagerie of 4D root systems. The first column in the table denotes the decomposition of the rank in terms of the rank of the irreducible components. In particular, all irreducible rank-4 Coxeter groups are either spinor induced (denoted by a tick $\checkmark$) or (conjugal) subgroups of those that are (denoted by  $\sim$). Likewise all Coxeter groups that are the product of a rank-3 group with $A_1$ can be obtained as (conjugal) subgroups of the irreducible ones. For the $2+2$ decomposition we encounter the case $I_2(n)\oplus I_2(n)$ that we found was induced from $I_2(n)\oplus A_1$. However, the general case $I_2(n)\oplus I_2(m)$ is the first case that cannot in general be spinorially induced (denoted by   $\text{\sffamily X}$). Likewise, $I_2(n)\times A_1\times A_1$ is neither spinor induced, nor a subgroup of the larger Coxeter groups for general $n$. However, the special case of $A_1^4$ was our first example of   spinor induction. 

In general, one would not expect most, and certainly not all, such rank-4 Coxeter groups to be spinor induced, as the series $A_n$, $B_n$ and $D_n$ exist in any dimension and one can form sums from smaller irreducible components. However, it is striking how many of them are inducible via spinors, in particular all those associated with exceptional phenomena in four dimensions. At this point, we therefore go back to our considerations of exceptional phenomena, trinities and the McKay correspondence.

\section{Arnol'd's Trinities and the McKay correspondence}\label{ACA_McKay}

In this section, we discuss a wider framework with multiple connections amongst trinities and different interpretations for them. However, we have also seen that the Clifford spinor construction is more general, and perhaps more akin to the McKay correspondence, than a trinity. We therefore begin by introducing the McKay correspondence.

The trinities $(2T, 2O, 2I)$ and $(E_6, E_7, E_8)$ of the binary polyhedral groups and the $E$-type Lie groups are connected via the McKay correspondence in the following sense. The binary polyhedral groups are discrete subgroups of $SU(2)$ and therefore each have a 2-dimensional irreducible spinor representation $2_s$. We can define a graph by assigning a node to each irreducible representation of the binary polyhedral groups with the following rule for connecting edges: each node corresponding to a certain irreducible representation is connected to the nodes corresponding to those irreducible representations that are contained in its tensor product with $2_s$. For instance, tensoring the trivial representation $1$ with $2_s$ trivially gives $2_s$ and thus the only link  $1$ has is with $2_s$; $2_s\otimes 2_s=1+3$, such that $2_s$ is connected to $1$ and $3$, etc. 
 On the Lie group side one considers the affine extension of $(E_6, E_7, E_8)$ achieved by extending the graph of the Dynkin diagram by an extra node. The McKay correspondence is the observation that the graphs derived in both ways are the same, as shown in Figure \ref{figE8aff}. In particular the affine node on the Lie group side corresponds to the trivial representation  of the binary polyhedral groups. There are other mysterious connections, for instance the coefficients of the highest/affine root  of the affine Lie group in terms of the roots of the unextended Lie group are given by the dimensionalities of the irreducible representations of the corresponding binary group. 

However, the McKay correspondence is more general than this relation between trinities, for it  holds for all finite subgroups of $SU(2)$, in particular the ones that have the 2-dimensional discrete subgroups of $SO(3)$ as preimages  under the universal covering map. This way the infinite families of the cyclic groups and the dicyclic groups correspond to the infinite families of affine Lie groups of $A$- and $D$-type. The McKay correspondence is therefore more a result on the $ADE$-classification than a mere trinity. In the sense that our Clifford spinor construction also applies to the infinite family of 2-dimensional groups $I_2(n)$, it feels closer in spirit to the McKay correspondence. 

In fact there is now an intricate web of connections between trinities, some well-known and several we believe to be new, as well as trinities appearing in different guises in multiple interpretations, as shown in Figure \ref{figMcKay}. 

The Clifford spinor construction inducing $(2T, 2O, 2I)$ from $(A_3, B_3, H_3)$ does not seem to be known, and the $(2T, 2O, 2I)$ then  induce the root systems $(D_4, F_4, H_4)$. The $(2T, 2O, 2I)$ also correspond to $(E_6, E_7, E_8)$ via the McKay correspondence. The affine Lie groups have the same Coxeter-Dynkin diagram symmetries as 
$(D_4, F_4, H_4)$, i.e. $S_3$ (triality) for $D_4$ and $E_6^+$, $S_2$  for $F_4$ and $E_7^+$, and $S_1$  for $H_4$ and $E_8^+$ ($H_4$ and $E_8$ also have the same Coxeter number/element, as easily shown in Clifford algebra), making a connection between these two trinities. 

In fact, $(A_3, B_3, H_3)$, $(2T, 2O, 2I)$ and $(E_6, E_7, E_8)$ are connected in one chain via $(12, 18, 30)$, which we have not encountered in the literature and which we suggest as a trinity in its own right. $(12, 18, 30)$ are the Coxeter numbers of $(E_6, E_7, E_8)$ -- performing all 6/7/8 fundamental reflections in the Coxeter groups $(E_6, E_7, E_8)$ corresponding to the simple roots  gives the so-called Coxeter elements $w$ of the groups; their order $h$ ($w^h=1$) is called the Coxeter number. However, $(12, 18, 30)$ is also the sum of the dimensions of the irreducible representations ($\sum d_i$) of the binary polyhedral groups. It does not appear to be known that this is also connected all the way to $(A_3, B_3, H_3)$, as $(12, 18, 30)$ is also the number of roots in their root systems.  

Similarly there is a chain linking $(A_3, B_3, H_3)$, $(2T, 2O, 2I)$ and $(D_4, F_4, H_4)$ via the trinity $(24, 48, 120)$. It is at the same time the number of different spinors generated by the reflections in $(A_3, B_3, H_3)$, the order of the binary polyhedral groups $(2T, 2O, 2I)$ given by the sum of the squares of the dimensions of the irreducible representations ($\sum d_i^2$), the number of roots of the 4-dimensional root systems $(D_4, F_4, H_4)$, as well as the square root of the order of their automorphism group. 

Without doubt, there are more connections to be found, and deeper reasons for these connections to exist, so we propose here the Clifford algebra approach as a novel and hopefully fruitful path to explore.

\section{Conclusions}\label{ACA_Concl}

In the literature, great significance is attached to quaternionic representations, in particular those in terms of pure quaternions. We have shown that this belief is misplaced, and that the situation is much clearer and more efficiently analysed in a geometric setup in terms of spinors. The pure quaternion sub root systems are not in fact deeply mysterious yet significant subsets of the rank-4 groups (something that only works if the group contains the inversion), but the rank-4 groups are instead induced from 3-dimensional considerations alone, and do not in fact contain more geometric content than that of three dimensions alone. Or perhaps the mystery  is  resolved and  the significance explained, now that there is a simple geometric explanation for it.

We have found novel connections between the Platonic solids and their four-dimensional counterparts, as well as other 4D polytopes.
In particular, our construction sheds light on the existence of all the exceptional phenomena in 4D such as self-duality of the 24-cell and triality of $D_4$, the exceptional root systems $F_4$ (largest crystallographic in 4D) and $H_4$ (largest non-crystallographic). 
The striking symmetries of these four-dimensional polytopes had been noticed but had not really been understood in any geometrically meaningful way. We have made novel connections in pure mathematics over a broad range of topics, and in relation to trinities and the McKay correspondence. The spinorial nature of the rank-4 root systems could also have profound consequences in high energy physics since these groups are pivotal in Grand Unified Theories and String and M Theory. So perhaps after the failed attempts of Plato, Kepler and Moon to order the elements and the universe, planets and nuclei in terms of the Platonic solids, they might still leave their mark on the universe in guises yet to be discovered. 


	\ack{I would like to thank my family and friends for their support and   David Hestenes, Eckhard Hitzer, Anthony Lasenby, Joan Lasenby, Reidun Twarock,  C\'eline B\oe hm, Richard Clawson, and Mike Hobson for helpful discussions, as well as David Hestenes and the ASU Physics department for their hospitality during the final stages of writing up.}





		%
		\begin{table}
		\begin{centering}\begin{tabular}{|c|c|c|c|c|c|}
		\hline
	
		\hline
		3D&Tetrahedron&&Octahedron&Icosahedron
		\tabularnewline
		Dual &self-dual&&Cube&Dodecahedron
		\tabularnewline
		\hline	\hline
		4D&5-cell&24-cell&16-cell&600-cell
		\tabularnewline
		Dual&self-dual&self-dual&8-cell&120-cell
		\tabularnewline
			\hline	\hline
			$n$D&$n$-simplex&&$n$-hyperoctahedron&
			\tabularnewline
			Dual&self-dual&&$n$-hypercube&
			\tabularnewline
		\hline
		\end{tabular}\par\end{centering}
		\caption
		{\label{tab_Solids} The regular convex polytopes in three (Platonic solids), four and higher dimensions (for a discussion of `Platonic Solids' in arbitrary dimensions see, for instance, \cite{Szajewska2012Faces}).  }
		\end{table}
		%


	%
	\begin{table}
	\begin{centering}\begin{tabular}{|c|c||c||c|c||c|c|c|}
	\hline
	rank-3 group&diagram&binary&rank-4 group&diagram&Lie algebra&diagram
	\tabularnewline
	\hline
	\hline
	$A_1\times A_1\times A_1$&{\includegraphics[width=1.5cm]{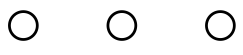}}&$Q$&$A_1\times A_1 \times A_1\times A_1$&{\includegraphics[width=2cm]{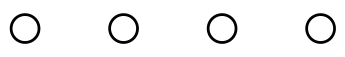}}&$D_4^+$&		{\includegraphics[width=1.5cm]{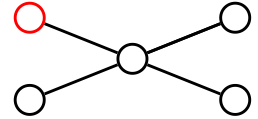}}
	\tabularnewline
	\hline
	$A_3$&{\includegraphics[width=1.5cm]{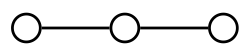}}&$2T$&$D_4$&		{\includegraphics[width=1.5cm]{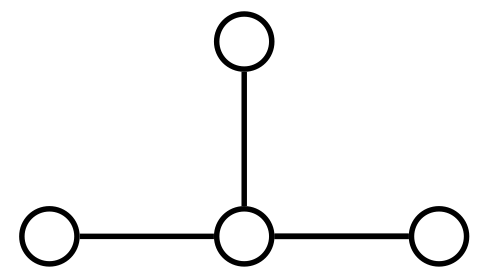}}
	&	$E_6^+$&	{\includegraphics[width=3cm]{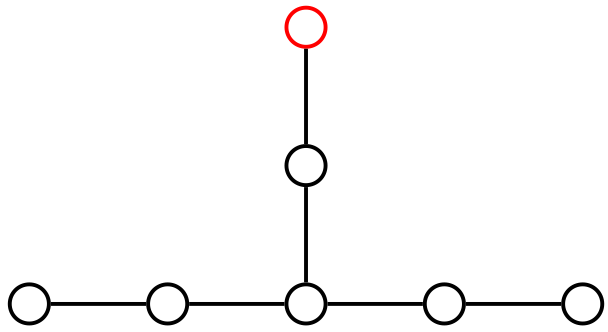}}
	\tabularnewline
	\hline
	$B_3$&{\includegraphics[width=1.5cm]{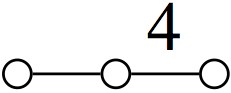}}
	&$2O$&$F_4$&
	{\includegraphics[width=2cm]{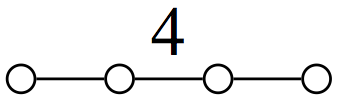}}&	$E_7^+$&	{\includegraphics[width=3.5cm]{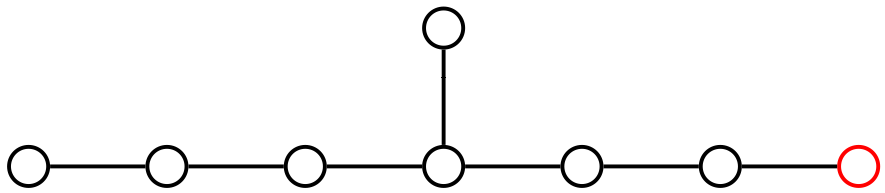}}
	\tabularnewline
	\hline
	$H_3$&{\includegraphics[width=1.5cm]{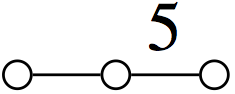}}&$2I$&$H_4$&		{\includegraphics[width=2cm]{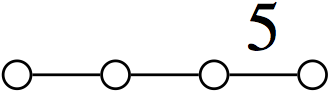}}
	& 	$E_8^+$&	{\includegraphics[width=4cm]{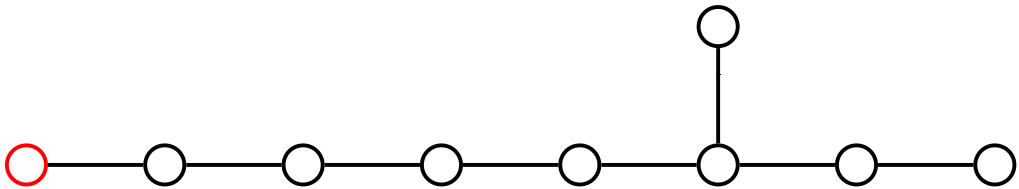}}
	\tabularnewline
	\hline
	\end{tabular}\par

	\end{centering}
	\caption{\label{tab_Corr} Overview over  the Coxeter groups discussed here and their Coxeter-Dynkin diagrams. Correspondence between the rank-3 and rank-4 Coxeter groups as well as the affine Lie algebras (the affine root is in red). The spinors generated from the reflections contained in the respective rank-3 Coxeter group via the geometric product are realisations of the binary polyhedral groups $Q$, $2T$, $2O$ and $2I$, which themselves generate (mostly exceptional) rank-4 groups, and are related to (mostly the type-$E$) affine Lie algebras via the McKay correspondence. }
	\end{table}
	%

	%
	\begin{table}
	\begin{centering}\begin{tabular}{|c|c|c|c|}
	\hline
	Platonic solid&Coxeter group&Root system&Simple roots $\alpha_i$
	\tabularnewline

	\hline
	\hline
	Tetrahedron&$A_1^3$&Octahedron&$(1,0,0)$, $(0,1,0)$, $(0,0,1)$
	\tabularnewline
	&$A_3$&Cuboctahedron&$(1,1,0)$, $(0,-1,1)$, $(-1,1,0)$
	\tabularnewline
	\hline
	Octahedron&$B_3$&Cuboctahedron&$(1,-1,0)$, $(0,1,-1)$, $(0,0,1)$
	\tabularnewline
	Cube&&+Octahedron&
	\tabularnewline
	\hline
	Icosahedron&$H_3$&Icosidodecahedron&$(0,-1,0)$, $(-\sigma,1,\tau)$, $(0,0,-1)$
	\tabularnewline
	Dodecahedron&&&
	\tabularnewline

	\hline
	\end{tabular}\par\end{centering}
	\caption
	{\label{tab_Summ} The reflective symmetries of the Platonic solids. The columns show respectively the Platonic solids,  their reflection symmetry groups (Coxeter groups), their root systems, and a set of simple roots (the normalisation has been omitted for better legibility). Here, $\tau$ is the golden ratio $\tau=\frac{1}{2}(1+\sqrt{5})$, and $\sigma$ is the other solution (its `Galois conjugate') to the quadratic equation $x^2=x+1$, namely $\sigma=\frac{1}{2}(1-\sqrt{5})$.}
	\end{table}
	%


%
\begin{table}
\begin{centering}\begin{tabular}{|c|c|c|c|c|}
\hline
Platonic solid&3D group&Spinors&4D Polytope&4D group
\tabularnewline

\hline
\hline
Tetrahedron&$A_1^3$&$Q$&16-cell&$A_1^4$
\tabularnewline
&$A_3$&$2T$&24-cell&$D_4$
\tabularnewline
\hline
Octahedron&$B_3$&$2O$&$F_4$-root system&$F_4$
\tabularnewline
Cube&&&&
\tabularnewline
\hline
Icosahedron&$H_3$&$2I$&600-cell&$H_4$
\tabularnewline
Dodecahedron&&&120-cell&
\tabularnewline

\hline
\end{tabular}\par\end{centering}
\caption{\label{tab_spin_Summ} Spinors generated by the reflective symmetries of the Platonic solids: the Coxeter reflections generate discrete spinor groups that are isomorphic to the quaternion group $Q$ (or the 8 Lipschitz units, in terms of quaternions), the binary tetrahedral group $2T$ (24 Hurwitz units), the binary octahedral group $2O$ (24 Hurwitz units and their 24  duals) and the binary icosahedral group $2I$ (120 Icosians). These generate certain rank-4 Coxeter groups. When re-interpreting 3D spinors as 4D vectors, these point to the vertices of certain regular convex 4-polytopes.

}
\end{table}
%



%
\begin{table}
\begin{centering}\begin{tabular}{|c|c|c|c|c|c|}
\hline
rank 3&$|\Phi|$&$|\Aut(\Phi)|$&rank 4&$|\Phi|$& $|\Aut(\Phi)|$
\tabularnewline

\hline
\hline
$A_3$&$12$&$24$&$D_4$&$24$&$2\cdot 24^2=1152$
\tabularnewline
\hline
$B_3$&$18$&$48$&$F_4$&$48$&$48^2=2304$ 
\tabularnewline
\hline
$H_3$&$30$&$120$&$H_4$&$120$&$120^2=14400$ 
\tabularnewline
\hline
$A_1^3$&$6$&$8$&$A_1^4$&$8$&$3!\cdot8^2=384$ 
\tabularnewline
\hline
$A_1\oplus A_2$&$8$&$12$&$A_2\oplus A_2$&$12$&$12^2=144$

\tabularnewline
\hline
$A_1\oplus H_2$&$12$&$20$&$H_2\oplus H_2$&$20$&$20^2=400$ 
\tabularnewline
\hline
$A_1\oplus I_2(n)$&$2n+2$&$4n$&$I_2(n)\oplus I_2(n)$&$4n$&$(4n)^2$ 
\tabularnewline
\hline
\end{tabular}\par\end{centering}
\caption{\label{tab_Symm} Summary of the non-trivial symmetries of 4D root systems that can be interpreted as induced from a 3D spinorial point of view: the 24-cell and snub 24-cell; $\Aut(\Phi_{F_4})$; 120-cell and 600-cell; 16-cell;  $\Aut(A_2\oplus A_2)$; the grand antiprism and $\Aut(H_2\oplus H_2)$. More generally, $\Aut(I_2(n)\oplus I_2(n))$ is of order $4n\times 4n$.}
\end{table}
%


%
\begin{table}
\begin{centering}\begin{tabular}{|c|c|c|c|c|c|}

\hline
$4$&$A_4$&$B_4$&$D_4$&$F_4$&$H_4$
\tabularnewline
\hline
&$\sim$&$\sim$&$\checkmark$&$\checkmark$&$\checkmark$
\tabularnewline
\hline
\hline
$3+1$&$A_3\times  A_1$&$B_3\times  A_1$&$H_3\times  A_1$&&
\tabularnewline
\hline
&$\sim$&$\sim$&$\sim$&&
\tabularnewline
\hline
\hline
$2+2$&$I_2(n)\times I_2(n)$&$I_2(n)\times I_2(m)$&&&
\tabularnewline
\hline
&$\checkmark$&$\text{\sffamily X}$&&&
\tabularnewline
\hline
\hline
$2+1+1$&$I_2(n)\times A_1\times A_1$&&&&
\tabularnewline
\hline
&$\text{\sffamily X}$&&&&
\tabularnewline
\hline
\hline
$1+1+1+1$&$A_1^4$&&&&
\tabularnewline
\hline
&$\checkmark$&&&&
\tabularnewline
\hline
\hline
\end{tabular}\par\end{centering}
\caption{\label{tab_rank4} The rank-4 Coxeter groups in terms of irreducible components. A tick $\checkmark$ denotes that the rank-4 group is induced directly via the Clifford spinor construction. A $\sim$ denotes that the group is a subgroup or even has a root system that is the sub root system (i.e. a conjugal subgroup) of a group that is spinor induced. $\text{\sffamily X}$ means that the root system cannot be induced spinorially.}
\end{table}
%


\begin{figure}
	\begin{center}
{\includegraphics[width=8cm]{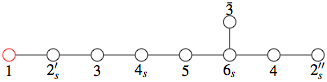}}

\end{center}
\caption{McKay correspondence for $E_8$ and $2I$: Dynkin diagram for the standard affine extension of $E_8$, here denoted $E_8^+$, and the graph for the tensor product structure of the binary icosahedral group $2I$, where nodes correspond to irreducible representations (labelled by their dimension $d_i$, and a subscript $s$ denotes a spinorial representation). The affine root $\alpha_0$ of $E_8^+$ (red) corresponds to the trivial representation of $2I$ and is given in terms of the other roots as $-\alpha_0=\sum d_i \alpha_i$. The sum of the dimensions of the irreducible representations of $2I$ gives the Coxeter number   $\sum {d_i}=30=h$ of $E_8$, and the sum of their squares  $\sum {d_i^2}=120$ gives the order of $2I$.}
\label{figE8aff}
\end{figure}


\begin{figure}
	\begin{center}
{\includegraphics[width=15cm]{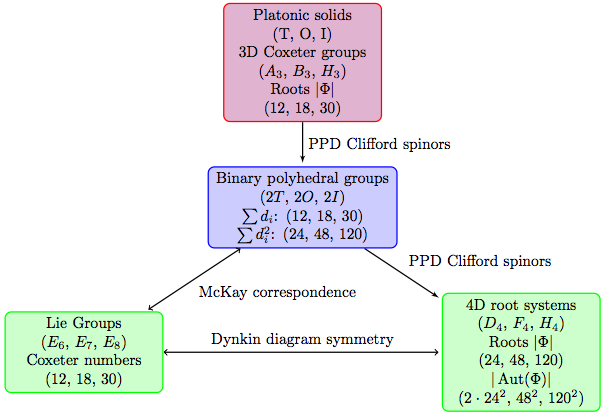}}

\end{center}
\caption{Overview over the web of connections between different trinities via the McKay correspondence and the Clifford spinor construction. }
\label{figMcKay}
\end{figure}


\end{document}